\theoremstyle{thmstyleone}%
\newtheorem{theorem}{Theorem}%  meant for continuous numbers
\theoremstyle{thmstyletwo}%
\theoremstyle{thmstylethree}%
\begin{document}

\title[Article Title]{Adaptive information-maximization encoding for ghost imaging--A general Bayesian framework under experimental physical constraints}

%%=============================================================%%
%% GivenName	-> \fnm{Joergen W.}
%% Particle	-> \spfx{van der} -> surname prefix
%% FamilyName	-> \sur{Ploeg}
%% Suffix	-> \sfx{IV}
%% \author*[1,2]{\fnm{Joergen W.} \spfx{van der} \sur{Ploeg} 
%%  \sfx{IV}}\email{iauthor@gmail.com}
%%=============================================================%%

\author[1,2,3]{\fnm{Jianshuo} \sur{Sun}}

\author*[1]{\fnm{Chenyu} \sur{Hu}}\email{huchenyu@ucas.ac.cn}

\author[2,3]{\fnm{Zunwang} \sur{Bo}}

\author[2,3]{\fnm{Zhentao} \sur{Liu}}

\author[2,3]{\fnm{Mengyu} \sur{Chen}}

\author[4]{\fnm{Longkun	} \sur{Du}}

\author[5]{\fnm{Weitao} \sur{Liu}}

\author*[1,2,3]{\fnm{Shensheng} \sur{Han}}\email{sshan@mail.shcnc.ac.cn}

\affil*[1]{\orgdiv{School of Physics and Optoelectronic Engineering}, \orgname{ Hangzhou Institute for Advanced Study, University of Chinese Academy of Sciences}, \orgaddress{\city{Hangzhou}, \postcode{310024}, \state{Zhejiang}, \country{China}}}

\affil[2]{\orgdiv{Aerospace Laser Technology and System Department}, \orgname{Shanghai Institute of Optics and Fine Mechanics, Chinese Academy of Sciences}, \orgaddress{\city{Shanghai}, \postcode{201800}, \country{China}}}

\affil[3]{\orgdiv{Center of Materials Science and Optoelectronics Engineering,}, \orgname{ University of Chinese Academy of Sciences}, \orgaddress{\city{Beijing}, \postcode{100049}, \country{China}}}

\affil[4]{\orgdiv{State Key Laboratory of Chemistry for NBC Hazards Protection}, \orgname{ PLA AMS}, \orgaddress{\city{Beijing}, \postcode{102205}, \country{China}}}

\affil[5]{\orgdiv{College of Science}, \orgname{National University of Defense Technology}, \orgaddress{\city{Changsha}, \postcode{410073}, \state{Hunan}, \country{China}}}

%\affil[2]{\orgdiv{Department}, \orgname{Organization}, \orgaddress{\street{Street}, \city{City}, \postcode{10587}, \state{State}, \country{Country}}}

%%==================================%%
%% Sample for unstructured abstract %%
%%==================================%%

\abstract{Ghost imaging~(GI) has demonstrated diverse imaging capabilities arising from its encoding-decoding-based computational imaging mechanism. 
Accordingly, information-theoretic studies have emerged as a promising direction for characterizing the fundamental performance limits of GI and related computational imaging paradigms.
Within the information-theoretic framework, the fundamental question of information-optimal encoding design in GI remains largely unexplored, primarily due to the intractability of the prior probability density function (PDF) of an unknown scene.  
Here, by leveraging the ability of recursively estimating the PDF of the object to be imaged via Bayesian filtering, we establish an adaptive information-maximization encoding~(AIME) design framework.  
Based on the adaptively estimated posterior PDF from previously acquired measurements, the expected information gain of subsequent detections is evaluated and maximized to design the corresponding encoding patterns in a closed-loop manner.
Under this framework, the theoretical form of the information-optimal encoding under the energy-constrained regime is analytically derived.
Corresponding experimental results show that, GI systems employing information-optimal encoding achieve markedly improved imaging performance compared with conventional fixed point-to-point imaging without relying on additional heuristic regularization, particularly in low signal-to-noise ratio regimes.
Moreover, the proposed AIME strategy is compared with existing encoding strategies under a representative experimental constraint, which consistently shows that it enables significantly enhanced information acquisition capability compared with existing encoding strategies, leading to substantially improved imaging quality. 
These results establish a principled information-theoretic foundation for optimal encoding design in computational imaging paradigms, provided that the forward model can be accurately characterized.}

\maketitle
\section{Introduction}\label{sec1}

Ghost imaging (GI)~\cite{moreau2018ghost} has demonstrated a broad range of capabilities, including acquisition of high-dimensional~(H-D) information through low-dimensional detection~\cite{eshun20253d}, enhancement of two-dimensional resolution via H-D light-field information~\cite{tong2020breaking,tong2025single}, non-locally decoupling detection efficiency and imaging resolution~\cite{yuan2021single}, and reduction of the unnecessary detection redundancy~\cite{hu2019optimization}.  
These advantages have enabled its diverse applications in imaging LiDAR~\cite{deng2017performance,pan2021micro}, single-shot H-D imaging cameras~\cite{liu2020spectral,chu2021spectral}, super-resolution microscopy~\cite{li2019single,chen2025multicolor}, low-dose and multi-dimensional radiographic imaging~\cite{zhang2018tabletop, he2021single}, imaging in special band~\cite{stantchev2020real,bogdanov2025ghost}, and task-oriented imaging~\cite{wu2025image,abbas2025target}.
From the perspective of imaging information acquisition~\cite{barrett2013foundations,shensheng2022review},  
% GI systems behave more in line with the encoding-decoding communication system than traditional imaging systems. 
GI can be interpreted as an encoding-decoding-based computational imaging paradigm~\cite{hu2022ghost}, where the imaging object is encoded by randomly fluctuating light fields, and the desired image is decoded through high-order light-field correlation. 
This formulation naturally places GI within an information-theoretic framework, under which analyses of information acquisition efficiency and fundamental performance are in principle attainable~\cite{li2017negative,hu2021correspondence,zhou2024theoretical}.

%And so far, several studies have discussed the imaging performance of GI with information-theoretic tools~\cite{enrong2013mutual,li2017negative,hu2021correspondence,zhentao2021some}.
Within the information-theoretic framework, a fundamental question in GI, and more broadly in such computational imaging paradigms, concerns the existence and design of information-theoretically optimal encoding~\cite{wei2016mutual}.
In information theory, the encoding optimality is commonly evaluated using information-content measures, such as mutual information~\cite{blahut1987principles} and Fisher information~\cite{fisher1922mathematical}. 
However, the probability density function~(PDF) of the object to be imaged, which is required for calculating these information measures, is generally intractable in practical imaging systems.
As a result, information-theoretic studies on encoding design of GI have so far been confined to very restrictive assumptions or special scenarios~\cite{enrong2013mutual} rather than a general imaging scenario.
As a practical compromise, encoding optimization has been extensively explored within the framework of compressed sensing~(CS), motivated by the principle that incoherent sampling can preserve the information of sparse signals with high efficiency~\cite{eldar2012compressed}. 
Assuming sparsity of the image in a specific transform domain, the encoding patterns in GI can be optimized by minimizing mutual coherence of the sensing matrix~\cite{xu2015optimization,czajkowski2018single,hu2019optimization}.
Nevertheless, such CS-based framework typically neglect the influence of measurement noise, as exact recovery guarantees are theoretically established only in the noiseless case~\cite{tropp2004greed}. 
In practical scenarios, the detection noise is often amplified~\cite{arias2011noise} in related image retrieval, rendering signal recovery particularly challenging in the low signal-to-noise ratio (SNR) regime.
Thus, existing approaches along this line remain theoretically incomplete, and are difficult to apply to GI under realistic low-SNR conditions.

Recently, by modeling the linear computational imaging system within the Bayesian filtering framework, Du \emph{et al.}~\cite{du2025information} proposed a recursive approach to estimate the PDF of the imaging object from already-acquired detection signals.
This formulation suggests a promising route for evaluating information-theoretic measures and designing information-optimal encoding light fields in an adaptive manner. 
However, from existing studies on adaptive GI~\cite{wu2020online,wang2023dual}, it remains unclear whether adaptive measurements can consistently yield fundamental performance gains, particularly from an information-theoretic perspective.
More broadly, in the context of adaptive CS, the superiority of adaptive sensing over non-adaptive measurements has long remained theoretically inconclusive~\cite{donoho2006compressed,arias2012fundamental,malloy2014near}, especially in noisy and practically constrained regimes. 

In this Article, we explicitly investigate the possibility to adaptively establish the information-optimal encoding for GI, and propose a general adaptive information-maximization encoding~(AIME) framework. 
It enables quantitative evaluation of the information content acquired in the encoding-detection process, and facilitates the design of optimal encoding under practical constraints through information maximization. 
Theoretical analysis and experimental results consistently reveal that the form  of information-optimal encoding is intrinsically governed by the underlying physical 
constraints. 
Under an ideal total-energy constraint corresponding to fixed illumination energy, the information-maximization encoding converges to a point-like illumination pattern, leading to an adaptive scanning strategy whose order is determined by the 
evolving object statistics rather than by a predetermined sequence. 
Owing to this adaptive design, AIME achieves superior imaging performance compared to conventional ``fixed point-to-point'' imaging modes without relying on additional heuristic regularization, particularly in low SNR regimes. 
Under realistic modulation constraints imposed by experimental hardware, although the optimal encoding no longer reduces to a point-like structure, it still emerges as an object-dependent adaptive pattern. 
In this regime, comparisons with representative encoding strategies show that AIME consistently provides improved imaging quality across a wide range of sampling ratio (SR) and SNR conditions. 
Further evaluation of the acquired information of different encoding strategies shows strong agreement with image quality, indicating that the performance gain originates from more efficient information acquisition rather than heuristic estimation effects. 
The proposed framework establishes a principled information-theoretic foundation for characterizing and approaching the fundamental performance of computational imaging paradigms.
%These results highlight the potential of information-theoretic adaptive encoding as a principled approach for performance enhancement in ghost imaging and more general computational imaging systems.
%\textit{Through comparisons with other types of encoding, we demonstrate that the proposed framework consistently out under different SNRs is demonstrated and verified in both the acquired information and the imaging quality. Moreover, it is observed that, by applying the information-optimal encoding to the under-sampling~(less than~$50\%$ sampling ratio~(SR)) case, the obtained imaging quality can achieve as good as that in the traditional imaging~($100\%$ SR) under the same SNR condition~(i.e., the illumination energy). }
% Also, the optimal design can combine with different kinds of regularization priors, like the sparsity prior obtained by learning. 
%Besides, as an adaptive encoding design mode, the fundamental limit is also investigated, which confirms the assertion of Cand{\'e}s that the adaptive sensing manner cannot ultimately improve the signal retrieval compared to the non-adaptive sensing manner in the sense of ensemble.
\FloatBarrier
\section{Results}\label{sec2}
\subsection{Principle of AIME for GI}
We formulate AIME as a closed-loop information-theoretic framework for GI, in which the encoding patterns are treated as variables to be optimized according to the evolving statistical state of the estimated imaging object. As illustrated in Fig.~\ref{fig:principle}~\textbf{b}, at each measurement step, the statistical state is represented by a posterior PDF inferred from previously detected signals, based on which the expected information gain of a candidate encoding is evaluated. The subsequent encoding is then determined by maximizing the information criterion under explicit physical followed by a recursive update of the posterior PDF after measurement.
Within this formulation, encoding, detection, and statistical inference in GI are integrated into a unified framework, enabling information-optimal encoding to be defined and implemented sequentially without requiring special assumption on prior knowledge of the object statistics.
%Figure~\ref{fig:principle} shows the schematic flowchart of the AIME framework in GI. 
%Based on the posterior PDF of the object to be imaged recursively estimated from already-acquired detection signals, the information contained in the subsequent detection signal can be evaluated by incorporating the forward detection model. 
%Then, the subsequent optimal encoding is adaptively optimized by maximizing the information content. And the object's image can be gradually estimated by updating the posterior PDF.  

%As shown in Fig.~\ref{fig:1}(a), in a typical GI system, the image formation process consist of several steps. First, the imaging object is illuminated by a set of modulated light fields generated from a digital micro-mirror device~(DMD). Then, the light reflected from the object is collected by the bucket detector.

\subsubsection{Bayesian formulation for recursive posterior PDF tracking} \label{sec:2.1.1}
To enable information-theoretic evaluation and optimization of encoding, the statistical state of the imaging object is characterized by a posterior PDF that is updated recursively as measurements are acquired.
To this end, the forward encoding-detection process of GI is formulated to construct the measurement-likelihood function~$p({\bf z} \mid {\bf x})$, specifically as~\cite{hu2022ghost, liu2025comprehensive}
\begin{equation}
{\bf z} = {\beta \bf H}{\bf x} + {\bf n}, \label{eq:z=Hx}
\end{equation} 
where~${\bf z}$ stands for the recorded signal, ${\bf x}$ denotes the desired object's image, ${\bf H}$ is the measurement matrix consisting of the encoding light fields~$\left\{ {\bf h}_1, {\bf h}_2, \cdots, {\bf h}_K \right\}^\top$ with~$K$ being the number of acquired measurements, $\bf n$ is the detection noise, and~$\beta$ represents a coefficient accounting for various system-dependent factors. This formulation is also shared by a broad class of computational imaging systems.

Then, an initial Gaussian distribution~$p({\bf x} \mid {\bf Z}_0) = p_0({\bf x})={\mathcal N}(\widehat{\bf x}_0, \widehat{\bf P}_0)$ is assumed by incorporating statistics of imaging scenes in the frequency domain~\cite{torralba2003statistics}(see Methods and Materials for more details). 
By considering the forward sub-model with a frame of light fields~${\bf h}_k$ constructed as~$p({\bf z}_k \mid {\bf x}) = {\mathcal N}(\beta{\bf h}_k^\top {\bf x}, R_k)$,
the posterior PDF after the~$k$-th measurement is obtained as
\begin{equation}
p({\bf x} \mid {\bf Z}_k) = \mathcal{N} (\widehat{\bf x}_k, \widehat{\bf P}_k).
\end{equation}
Here a Gaussian approximation~$n_k \sim {\mathcal N}(0,R_k)$ for the noise model is adopted, where~$R_k = \omega^2 {\bf h}_k^\top \widehat{\bf x}_{k-1}$ results from the photon noise assumption~\cite{liu2025comprehensive}, with the current estimate~$\widehat{\bf x}_{k-1}$ used to approximate the unknown true~${\bf x}$.
%photon noise assumption so that~$z_k \sim {\rm Poisson} \left(\beta {\bf h}_k^\top {\bf x}\right)$, and further approximate it as a Gaussian one such that~$R_k = \beta {\bf h}_k^\top {\bf x}$. 
%Since the true~${\bf x}$ is unknown during the imaging process, it is approximated with the current estimation~$\widehat{\bf x}_{k-1}$. As it can be seen below, the effectiveness of this approximation is empirically verified.
Under these settings, the posterior statistics~${\widehat {\bf x}}_k$, $\widehat{\bf P}_k$ are obtained via recursive expressions
\begin{subequations}
\begin{equation}
	{{\widehat {\bf x}}_k} = {\widehat {\bf x}}_{k-1}  + {{\bf K}_k}({{\bf z}_k} - \beta {{\bf h}_k^\top}{\widehat {\bf x}_{k-1}} ),
\end{equation}
\begin{equation}
	\widehat{\bf P}_k = ({\bf I} -\beta {{\bf K}_k}{{\bf h}_k^\top})\widehat{\bf P}_{k-1},
\end{equation}
\end{subequations}
with
\[
{{\bf K}_k} = \beta \widehat{\bf P}_{k-1} {\bf h}_k{( \beta^2 {{\bf h}_k^\top}\widehat{\bf P}_{k-1} {\bf h}_k + {R_k})^{ - 1}}.
\]
These posterior statistics serve as sufficient descriptors of the statistical state and provide the basis for quantitatively assessing the information content of subsequent measurements, as will be used in the following subsection.
\begin{figure}[t]
\centering
\includegraphics[width=0.73\linewidth]{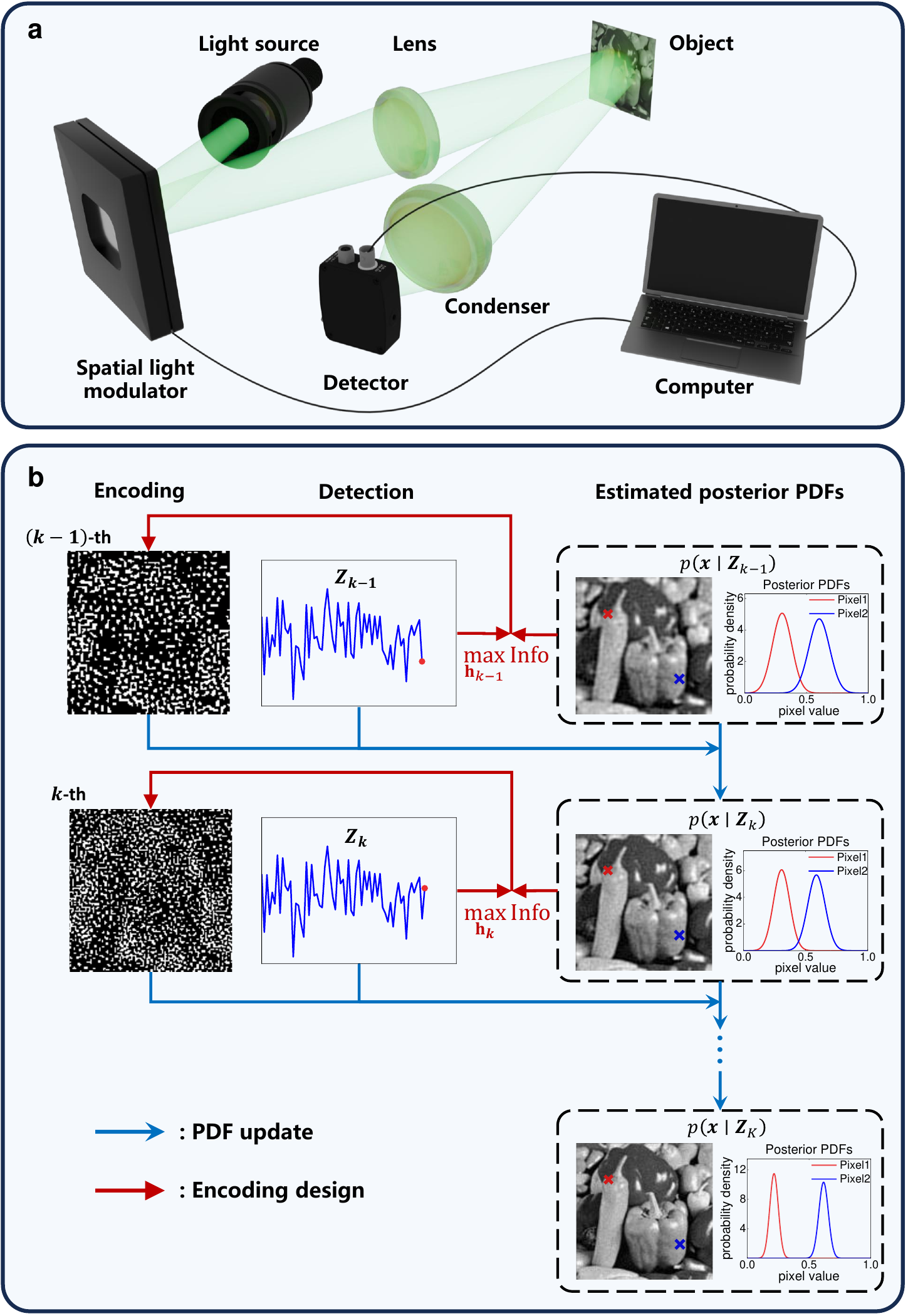}
\caption{\textbf{a.} Schematic illustration of the ghost imaging~(GI) system based on a spatial light modulator. \textbf{b.} Overview of the proposed adaptive information-maximization encoding~(AIME) framework for GI. The optimal encoding is designed by maximizing the information contained in the detection signal of the object to be imaged. The information is calculated based on the forward detection model and posterior PDFs.}
\label{fig:principle}
\end{figure}
\FloatBarrier

\subsubsection{Theoretical objective of AIME}
Based on the posterior PDF $p({\bf x} \mid {\bf Z}_k)$ obtained from recursive inference, the encoding for the $(k+1)$-th measurement can be formulated as a variable to be optimized according to an information-theoretic criterion. In the AIME framework, this optimization is guided by the expected reduction of posterior uncertainty induced by a candidate encoding, which provides a principled definition of information-optimal encoding. Here, the mutual information and the Fisher-information-related criterion are employed as two representative information-theoretic measures of this principle.

%Based on the posterior PDF~$p({\bf x} \mid {\bf Z}_k)$, the~$(k+1)$-th encoding can be designed by maximizing the information contained in the subsequent detection signal about the imaging object. In this study, the mutual information and Fisher information are used to serve as principled measures, specifically as follows.
\begin{itemize}
\item \textbf{The mutual information} between the~$k$-th detection signal and the imaging object~${\rm I}(z_{k}, {\bf x} \mid {\bf Z}_{k-1})$ is calculated by
\begin{equation}
	{\rm I}({\bf z}_{k}, {\bf x} \mid {\bf Z}_{k-1}) = \int {\rm d} {\bf z}_k {\rm d}{\bf x} p({\bf z}_k,{\bf x} \mid {\bf Z}_{k-1}) ) \log \frac{p({\bf z}_k,{\bf x} \mid {\bf Z}_{k-1})}{p({\bf z}_k) p({\bf x} \mid {\bf Z}_{k-1})}
\end{equation}
and obtained as
\begin{equation} 
	{\rm I}({\bf z}_{k}, {\bf x} \mid {\bf Z}_{k-1}) = \frac{1}{2} \log \frac{1}{\left|{\bf I} - \beta {{\bf K}_k}{{\bf h}_k^\top}\right|} = \frac{1}{2}\log \left(1+\beta^2 \frac{{{{\bf{h}}_k^\top}{{\widehat {\bf{P}}}_{k - 1}}{\bf{h}}_k}}{{{R_k}}} \right).
\end{equation}
Hence, maximizing the mutual information is equivalent to maximizing the following objective
\begin{equation} \label{eq:obj-MI}
	{\mathcal L}_{\rm MI} = \frac{{{{\bf{h}}_k^\top}{{\widehat {\bf{P}}}_{k - 1}}{\bf{h}}_k}}{{{R_k}}}.
\end{equation}
For convenience, this mutual-information-based formulation is hereafter referred to as AIME-MI. 

\item \textbf{The Fisher information} contained in the detection signal~$z_k$ is in fact represented by the Fisher information matrix~(FIM)~${\widehat{\bf P}}_k^{-1}$~\cite{du2025information}. Since the calculation of matrix inverse is time-consuming, instead of FIM, here the Cram{\'e}r-Rao bound~(CRB) is applied to perform the information-optimal encoding design by minimizing it. Specifically, the mean CRB~(mCRB) which stand for the error bound averaging on all pixels of the imaging object is calculated by 
\begin{equation}
	{\rm mCRB} = {\rm Tr} ({\widehat{\bf P}}_k)
\end{equation}
and obtain as
\begin{equation} \label{eq: mCRB expression}
	{\rm mCRB} = {\rm Tr} ({\widehat{\bf P}}_{k-1}) - \frac{{\bf h}_k^{\top} \widehat{{\bf P}}_{k-1}^2 {\bf h}_{k}}{{\bf h}_k^{\top} \widehat{{\bf P}}_{k-1} {\bf h}_{k}+\frac{R_k}{\beta^2}}.
\end{equation}
Since the covariance matrix~$\widehat{\bf{P}}_{k-1}$ has been determined by already-acquired detection, to minimize the mean CRB, only the last term in Eq.~(\ref{eq: mCRB expression}) can be optimized. Hence, minimizing the mCRB is equivalent to maximizing the following objective
\begin{equation} \label{eq:obj-CRB}
	{\mathcal L}_{\rm CRB} = \frac{{\bf h}_k^{\top} \widehat{{\bf P}}_{k-1}^2 {\bf h}_{k}}{{\bf h}_k^{\top} \widehat{{\bf P}}_{k-1} {\bf h}_{k}+\frac{R_k}{\beta^2}}.
\end{equation}
For convenience, this formulation is hereafter referred to as AIME-CRB.
\end{itemize}
With Eqs.~(\ref{eq:obj-MI}) and~(\ref{eq:obj-CRB}), it is possible to optimize the subsequent encoding~${\bf h}_k$ by
%To this end, we propose to maximize the mutual information~${\rm I}(z_{k}, {\bf x} \mid {\bf Z}_{k-1})$. 
\begin{equation} \label{eq:ObjFunc}
\max_{{\bf h}_k} {\mathcal L}, ~~~{\rm s.t.}~~ {\mathcal C}({\bf h}_k)
\end{equation}
where~${\mathcal C}({\bf h}_k)$ denotes the practical constraints on the encoding~${\bf H}_k$. 
We would like to note that, although derived from an estimation-theoretic perspective, the CRB-based criterion quantifies the same objective of posterior uncertainty reduction as the mutual-information-based formulation.

%These two formulations should be understood as different information-theoretic instantiations of the same AIME framework, rather than distinct encoding strategies, as both are derived from the posterior statistics and aim at optimizing the same information-driven objective.

%It is worth emphasizing that the proposed AIME framework is not restricted to a specific form of constraint or imaging system. As long as the forward model and the associated hardware constraints of the imaging system are accurately characterized, the AIME framework can be readily adapted to accommodate different practical limitations and system configurations. 
Under this formulation, the design of information-optimal encoding is unified as a constrained optimization problem driven by posterior PDF. Once a specific information criterion and the physical constraints on the encoding are specified, the structure of the optimal encoding follows directly from the solution of this optimization problem. Consequently, the AIME framework itself is not tied to a particular form of information measure or constraint, but provides a general principle for encoding design that can be instantiated under different system configurations.

%It is worth emphasizing that the AIME framework itself is not tied to a specific form of information criterion or constraint. Once the forward model and the associated physical constraints of an imaging system are characterized, the same posterior-driven optimization principle can be directly applied to accommodate different system configurations.

\FloatBarrier
\subsection{Information-optimal encoding in an ideal energy-constrained regime}
% \begin{figure}[t]
%	\centering
%	\includegraphics[width=0.99\linewidth]{Exper-Result-Point-to-Point_v260114-1002.pdf}
%	\caption{Illustration of the theoretical result of AIME patterns, each pattern is normalized to its maximum value for visualization. Figures \textbf{a1, a2} and \textbf{b1, b2} show the AIME results of two different imaging objects. \textbf{a1} and \textbf{b1} are patterns obtained by analytical solution of Eq.~(\ref{eq:AIME_MI}), each of which show exactly a scanning point; and the position of the point is marked with a purple arrow and box. \textbf{a2} and \textbf{b2} are patterns obtained by minimizing the mean CRB, which gradually become a small-size spot as the number of measurements increases; and the spot region in each pattern is displayed in an enlarged view. The first Hermite-Gaussian pattern is employed as the first frame of encoding pattern here.}
%	\label{fig:theo-result_v}
%\end{figure}
\begin{figure}[t]
	\centering
	\includegraphics[width=0.8\linewidth]{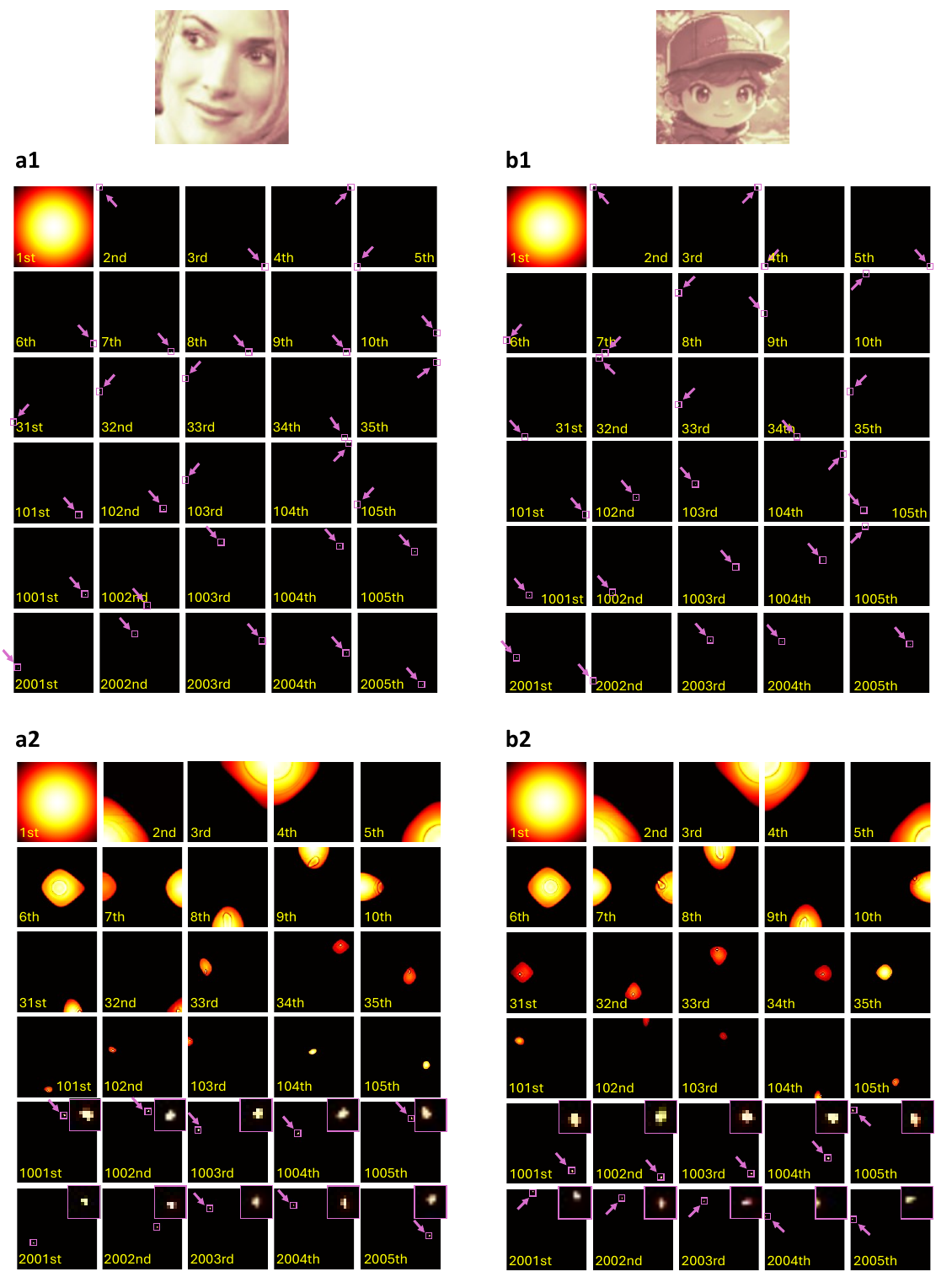}
	\caption{Illustration of the theoretical result of AIME-$\ell_1$ patterns, each pattern is normalized to its maximum value for visualization. Figures \textbf{a1, a2} and \textbf{b1, b2} respectively show AIME results of two different imaging objects. \textbf{a1} and \textbf{b1} are patterns obtained by analytical solution of Eq.~(\ref{eq:AIME_MI}), each of which show exactly a scanning point; and the position of the point is marked with a purple arrow and box. \textbf{a2} and \textbf{b2} are patterns obtained by minimizing the mean CRB, which gradually become a small-size spot as the number of measurements increases; and the spot region in each pattern is displayed in an enlarged view. The first Hermite Gaussian pattern is employed as the first frame of encoding pattern here for all cases.}
	\label{fig:theo-result_v}
\end{figure}
Based on the information-theoretic objective derived in the previous subsection, we first examine AIME in an idealized yet physically meaningful encoding regime. Specifically, we consider a total-energy constraint on the encoding light fields, which captures a fundamental limitation in optical imaging systems. Since the total energy of encoding~${\bf h}_k$ can be represented by its~$\ell_1$ norm, i.e., $\sum_j {{\bf h}_{k}}_j = \|{\bf h}_k\|_1 $, Aime under this constraint is denoted with suffix ``-$\ell_1$''.
%This regime serves as a benchmark scenario for revealing the intrinsic structure of information-optimal encoding.
In this regime, the optimal encoding admits a closed-form analytical solution, as established in the following theorem. 
%More practical constraints arising from hardware limitations and implementation considerations will be addressed in subsequent sections.

%Based on the information-theoretic objective derived in the previous subsection, we first present theoretical results of the AIME under a physically meaningful constraint on the encoding process. 
%Specifically, we impose a total-energy constraint on the encoding light field, which reflects a common practical limitation in optical imaging systems, allowing for a fair comparison of the fundamental performance among different encoding strategies.
%Under this constraint, the optimal encoding admits a closed-form analytical solution, as formalized in the following theorem. Other practical constraints arising from hardware limitations and implementation considerations will be examined later in the experimental section.
\begin{theorem}[Theoretical solution of information-optimal encoding under a total-energy constraint]\label{thm1}
	Consider the following optimization problem:
	\begin{equation}
		\begin{aligned}
			\max _{\mathbf{h}_k} & \frac{\mathbf{h}_k^{\top} \mathbf{P}_{k-1} \mathbf{h}_k}{\mathbf{h}_k^{\top} \mathbf{x}}, \\
			\rm { s.t. } & \left\|\mathbf{h}_k\right\|_1=C, \mathbf{h}_k \succeq \mathbf{0},
		\end{aligned}
		\label{eq:AIME}
	\end{equation}
	where~$\mathbf{P}_{k-1}$ is a positive definite covariance matrix, $\mathbf{x}$ is a non-negative vector, and~$C>0$ is a constant.
	Then the above problem admits an analytical optimal solution. Specifically, the optimal encoding vector is given by
	\begin{equation}
		\hat{\mathbf{h}}_k=\mathbf{e}_{\hat{i}}, \quad \hat{i}=\arg \max _i \frac{\left(\mathbf{P}_{k-1}\right)_{i i}}{x_i},
		\label{eq:AIME_MI}
	\end{equation}
	where~$\mathbf{e}_{\hat{i}}$ denotes the standard basis vector whose $\hat{i}$-th element is one and all other elements are zero.
\end{theorem}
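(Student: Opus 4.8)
The plan is to turn the constrained Rayleigh-type quotient into a clean inequality on the probability simplex and then exhibit a matching maximizer. First I would \emph{normalize the scale}: the objective $\mathbf{h}_k^{\top}\mathbf{P}_{k-1}\mathbf{h}_k/(\mathbf{h}_k^{\top}\mathbf{x})$ is homogeneous of degree one in $\mathbf{h}_k$, while the feasible set $\{\mathbf{h}_k\succeq\mathbf{0},\ \|\mathbf{h}_k\|_1=C\}$ is the scaled probability simplex, so writing $\mathbf{h}_k=C\mathbf{p}$ with $\mathbf{p}$ in $\Delta=\{\mathbf{p}\succeq\mathbf{0},\ \sum_i p_i=1\}$ reduces the problem to $\max_{\mathbf{p}\in\Delta} f(\mathbf{p})$ with $f(\mathbf{p})=\mathbf{p}^{\top}\mathbf{P}_{k-1}\mathbf{p}/(\mathbf{p}^{\top}\mathbf{x})$. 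The denominator is strictly positive on $\Delta$ (the relevant $\mathbf{x}$, being $\widehat{\mathbf{x}}_{k-1}$, has positive entries since otherwise the noise variance $R_k=\beta\mathbf{h}_k^{\top}\widehat{\mathbf{x}}_{k-1}$ would vanish), so $f$ is continuous on the compact set $\Delta$ and a maximizer exists.

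The crux is a pointwise bound showing that vertices of $\Delta$ dominate. I would establish that for any positive semidefinite $\mathbf{P}_{k-1}$ and any $\mathbf{p}\in\Delta$,
\[
\sum_i p_i (\mathbf{P}_{k-1})_{ii} - \mathbf{p}^{\top}\mathbf{P}_{k-1}\mathbf{p}
= \tfrac12\sum_{i,j} p_i p_j\,(\mathbf{e}_i-\mathbf{e}_j)^{\top}\mathbf{P}_{k-1}(\mathbf{e}_i-\mathbf{e}_j)\ \ge\ 0,
\]
where the identity uses $\sum_i p_i=1$ and the symmetry of $\mathbf{P}_{k-1}$. Combining this with $\sum_i p_i(\mathbf{P}_{k-1})_{ii}=\sum_i p_i x_i\,(\mathbf{P}_{k-1})_{ii}/x_i\le M\sum_i p_i x_i$, where $M=\max_i (\mathbf{P}_{k-1})_{ii}/x_i$, gives $\mathbf{p}^{\top}\mathbf{P}_{k-1}\mathbf{p}\le M\,(\mathbf{p}^{\top}\mathbf{x})$, i.e. $f(\mathbf{p})\le M$ for every $\mathbf{p}\in\Delta$.

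It then remains to check achievability and pin down the structure: evaluating at the vertex $\mathbf{e}_{\hat i}$ with $\hat i=\arg\max_i (\mathbf{P}_{k-1})_{ii}/x_i$ yields $f(\mathbf{e}_{\hat i})=(\mathbf{P}_{k-1})_{\hat i\hat i}/x_{\hat i}=M$, so the bound is tight and $\hat{\mathbf{h}}_k=C\mathbf{e}_{\hat i}$ is optimal, which is exactly the claimed $\mathbf{e}_{\hat i}$ up to the fixed normalization $C$. For the ``point-like'' character I would trace the equality conditions: since $\mathbf{P}_{k-1}$ is \emph{strictly} positive definite, $(\mathbf{e}_i-\mathbf{e}_j)^{\top}\mathbf{P}_{k-1}(\mathbf{e}_i-\mathbf{e}_j)>0$ for $i\ne j$, so the first inequality is an equality only when $\mathbf{p}$ is supported on a single coordinate; hence every optimizer is a single basis vector, and if the maximizing index is unique, the information-optimal encoding is the unique scanning point shown in Fig.~\ref{fig:theo-result_v}.

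I expect no deep obstacle here: the work is mostly bookkeeping — justifying the reduction to the simplex and the strict positivity of $\mathbf{p}^{\top}\mathbf{x}$ — together with the one genuinely non-routine step, namely recognizing the symmetrization identity above, after which the chain of inequalities and the matching vertex make the conclusion immediate.
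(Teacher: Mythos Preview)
Your argument is correct but proceeds by a genuinely different route from the paper. The paper invokes Dinkelbach's fractional-programming transform to pass from the ratio to the parametric family $\mathbf{h}_k^{\top}\mathbf{P}_{k-1}\mathbf{h}_k-\alpha\,\mathbf{h}_k^{\top}\mathbf{x}$, completes the square to exhibit convexity in $\mathbf{h}_k$, and then appeals to the general fact that a convex function attains its maximum over a simplex at an extreme point. You instead bypass both the parametric reduction and the abstract extreme-point principle by producing a direct two-step inequality on the simplex: the symmetrization identity $\sum_i p_i(\mathbf{P}_{k-1})_{ii}-\mathbf{p}^{\top}\mathbf{P}_{k-1}\mathbf{p}=\tfrac12\sum_{i,j}p_ip_j(\mathbf{e}_i-\mathbf{e}_j)^{\top}\mathbf{P}_{k-1}(\mathbf{e}_i-\mathbf{e}_j)\ge 0$ followed by the weighted-mean bound $\sum_i p_i(\mathbf{P}_{k-1})_{ii}\le M\,\mathbf{p}^{\top}\mathbf{x}$. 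Your approach is more elementary and self-contained (no citation to fractional programming, no appeal to polytope extreme-point theory), and it buys you something the paper's proof does not make explicit: by tracking the equality case in the symmetrization identity under strict positive definiteness, you obtain for free that \emph{every} maximizer is a single basis vector, not merely that some basis vector is optimal. Conversely, the paper's convexity argument is shorter once one accepts the cited machinery, and generalizes more transparently to other convex numerators. Your handling of the positivity of $\mathbf{p}^{\top}\mathbf{x}$ via the physical interpretation of $R_k$ is also a useful clarification that the paper's proof leaves implicit.
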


%As a direct consequence of Theorem~\ref{thm1}, the information-optimal encoding that maximizes the mutual information admits a point-wise structure under the total-energy constraint.
%\begin{equation}
%	\widehat{\bf h}_k = \arg\max_i \frac{\left(\widehat{\bf P}_k\right)_{ii}}{\left(\widehat{\bf x}_k\right)_i}.
%\end{equation}
%This establishes that, in the ideal energy-constrained regime, information-optimal encoding necessarily reduces to an adaptive point-wise scanning strategy.
%
%
%As a direct consequence of~Theorem~\ref{thm1}, the information-optimal encoding of Eq.~(\ref{eq:obj-MI}) reduces to
%\begin{equation}
%	\widehat{\bf h}_k = \arg\max_i \frac{\left(\widehat{\bf P}_k\right)_{ii}}{\left(\widehat{\bf x}_k\right)_i},
%	\label{eq:AIME_MI}
%\end{equation}
%where~$\left(\widehat{\bf P}_k\right)_{ii}$ denotes the~$i$-th diagonal element of matrix~$\widehat{\bf P}_k$ and~$\left(\widehat{\bf x}_k\right)_i$ is the~$i$-th entry of~$\widehat{\bf x}_k$. 
This Theorem establishes that, in the ideal energy-constrained regime, information-optimal encoding necessarily reduces to an adaptive point-wise scanning strategy. 
Importantly, this point-wise structure does not imply a predetermined or fixed scanning order, but instead, varies according to the evolving posterior statistics.
For the CRB-based objective shown as Eq.~(\ref{eq:obj-CRB}), although a closed-form analytical solution of the optimal encoding  is difficult to derive, simulation and experimental results indicate a consistent structural evolution behavior: the optimal encoding initially appears as a diffuse illumination pattern and gradually concentrates into a localized spot as the number of acquired measurements increases.
\begin{figure}[t]
	\centering
	\includegraphics[width=0.9\linewidth]{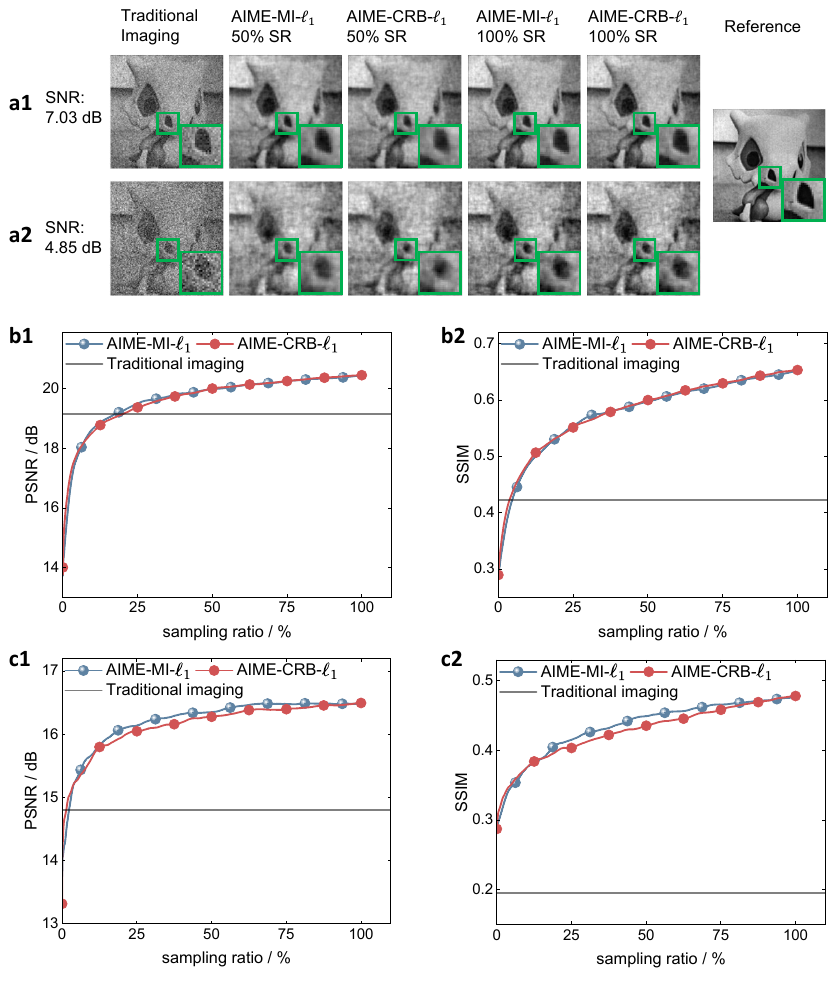}
	\caption{Experimental comparison of imaging results of GI with AIME-$\ell_1$ and conventional imaging. \textbf{a:} Representative imaging results obtained by GI AIME-$\ell_1$ and conventional imaging under sampling ratios (SRs) of 50\% and 100\%. Results are shown for two signal-to-noise ratio (SNR) levels, as indicated on the left. The reference image  is shown on the right, which is obtained by full-sampling point-wise scanning under a significantly high SNR.
		\textbf{b:} Quantitative evaluation of imaging quality in terms of PSNR and SSIM as functions of the sampling ratio for the case of~${\rm SNR} = 7.03~{\rm dB}$. The horizontal gray lines indicate the performance of conventional full-sampling (100\% SR) imaging for comparison.
		\textbf{c:} Corresponding PSNR and SSIM curves for the case of~${\rm SNR} = 4.85~{\rm dB}$. The horizontal gray lines again indicate the performance of conventional full-sampling (100\% SR) imaging for comparison.}
	\label{fig:exper-result-2}
\end{figure}
Figure~\ref{fig:theo-result_v} provides an illustrative visualization of the AIME-$\ell_1$ encoding patterns, where panels~\textbf{a} and~\textbf{b} correspond to two representative imaging objects. To ensure a fair comparison, the first encoding pattern is identically initialized as the first Hermite Gaussian mode~\cite{huang2024compressed} for all cases. The results visually corroborate the theoretical prediction that information-optimal encoding evolves toward an adaptive point-wise structure under the total-energy constraint, which fundamentally differs from conventional raster-scan imaging in that the scanning order is not fixed but adaptively determined by the statistical structure of the object.
This difference between AIME-$\ell_1$ and conventional imaging lead to an improved imaging performance. To show that,
A quantitative experimental comparison under identical total energy and noise conditions is performed. 
As shown in Fig.~\ref{fig:exper-result-2}, the experimental results indicate that AIME consistently achieves higher imaging quality than traditional imaging, particularly in low-SNR regimes. 
In specific, with approximately 50\% sampling, AIME yields a visually recognizable imaging result at a detection SNR of 7.03 dB, whereas conventional imaging with 100\% sampling fails to recover the object reliably. When both methods operate at full sampling, AIME produces noticeably improved image quality, with an increase of about 0.2 in SSIM. 
As the detection SNR further decreases to 4.85 dB, the object becomes nearly indistinguishable in conventional imaging, while AIME remains capable of providing a coarse yet informative image. 
Details of the experimental setting for this comparison is presented in Materials and Methods, and more experimental comparison results on representative scenes are provided in the supplementary materials.
These results demonstrate that the AIME strategy enables more efficient information utilization than fixed raster-scan acquisition in the ideal energy-constrained regime.

An alternative intuition for the emergence of point-wise optimal encoding can be drawn from CS. It is well established in CS theory that sparse signals can be efficiently acquired through incoherent measurements in a conjugate domain, for example, spatially sparse signals can be recovered from partial Fourier measurements~\cite{candes2006robust}. In the present imaging scenario, natural scenes are often sparse in transform domains such as the discrete cosine transform~(DCT) domain. From this perspective, concentrating measurements on a limited set of spatial locations can be viewed as an intuitive counterpart to selectively probing sparse degrees of freedom in a conjugate domain, offering a qualitative analogy to the point-wise scanning behavior predicted by the information-theoretic analysis.

\FloatBarrier
\subsection{Information-optimal encoding under realistic modulation constraints}
The preceding analysis establishes the intrinsic structure of information-optimal encoding under an idealized total-energy constraint. In practical GI systems, however, additional modulation constraints on the encoding patterns are inevitably imposed. It is therefore essential to examine whether the advantages of AIME persist under such realistic conditions. To this end, beyond the ideal total-energy constraint, we further impose a bounded-amplitude constraint on the encoding patterns,
\begin{equation}
	0 \le h_{k,i} \le C, \quad \forall i ,
\end{equation}
which reflects realistic hardware limitations such as the finite modulation depth of spatial light modulators or digital micromirror devices. 
To differentiate with AIME-$\ell_1$, AIME under this constraint is denoted with suffix ``-$\ell_\infty$'', since the maximal element of non-negative encoding~${\bf h}_k$ can be represented by its~$\ell_\infty$ norm.

Under such constraints, the closed-form analytical solution derived above no longer applies, and the information-optimal encoding is therefore obtained numerically by solving the resulting constrained optimization problem. 
Figure~\ref{fig:illustrate adaptive} provides an illustrative visualization of the joint evolution of  imaging results and the optimized encoding patterns under the proposed strategy. 
Two different imaging scenes, namely \textit{cameraman} and \textit{pepper}, are considered, and the peak signal-to-noise ratio (PSNR) is used to quantify the imaging quality.
Figures~\ref{fig:illustrate adaptive}~\textbf{(a1)} and~\textbf{(b1)} show imaging results together with the corresponding optimized encoding patterns at different sampling ratios (SRs), where the selected SRs correspond to the marked points in Figs.~\ref{fig:illustrate adaptive}~\textbf{(a2)} and~\textbf{(b2)}, respectively. 
The optimized encoding patterns exhibit a clear coarse-to-fine evolution as the SR increases. At low SRs, the patterns consist of large-scale speckle-like structures, whereas at higher SRs, finer spatial features gradually emerge. 
This behavior is consistent with the gradual refinement of the posterior distribution in the Bayesian filtering process, where early measurements focus on coarse information content and subsequent measurements concentrate on increasingly localized spatial details.
The corresponding PSNR curves shown in Figs.~\ref{fig:illustrate adaptive}~\textbf{(a2)} and~\textbf{(b2)} demonstrate a rapid improvement in reconstruction quality at the early stage of sampling, indicating the high information efficiency of the proposed strategy under limited measurements. To further illustrate the accuracy and uncertainty evolution of the reconstruction, the posterior probability density functions (PDFs) of two representative pixel locations (marked by red and blue crosses in the estimated images) are plotted in Figs.~\ref{fig:illustrate adaptive}~\textbf{(a3)} and~\textbf{(b3)}. As SR increases, the posterior PDFs become progressively sharper, reflecting the systematic reduction of estimation uncertainty and the convergence of pixel-wise intensity estimates.
By comparing the results obtained for different imaging scenes, it is evident that the optimized encoding patterns are also strongly scene-dependent under the bounded-amplitude constraint. This observation is fully consistent with the theoretical formulation of AIME and highlights its ability to adaptively allocate measurements according to the estimated statistical state of the imaging object.

\begin{figure}[t]
	\centering
	\includegraphics[width=0.85\linewidth]{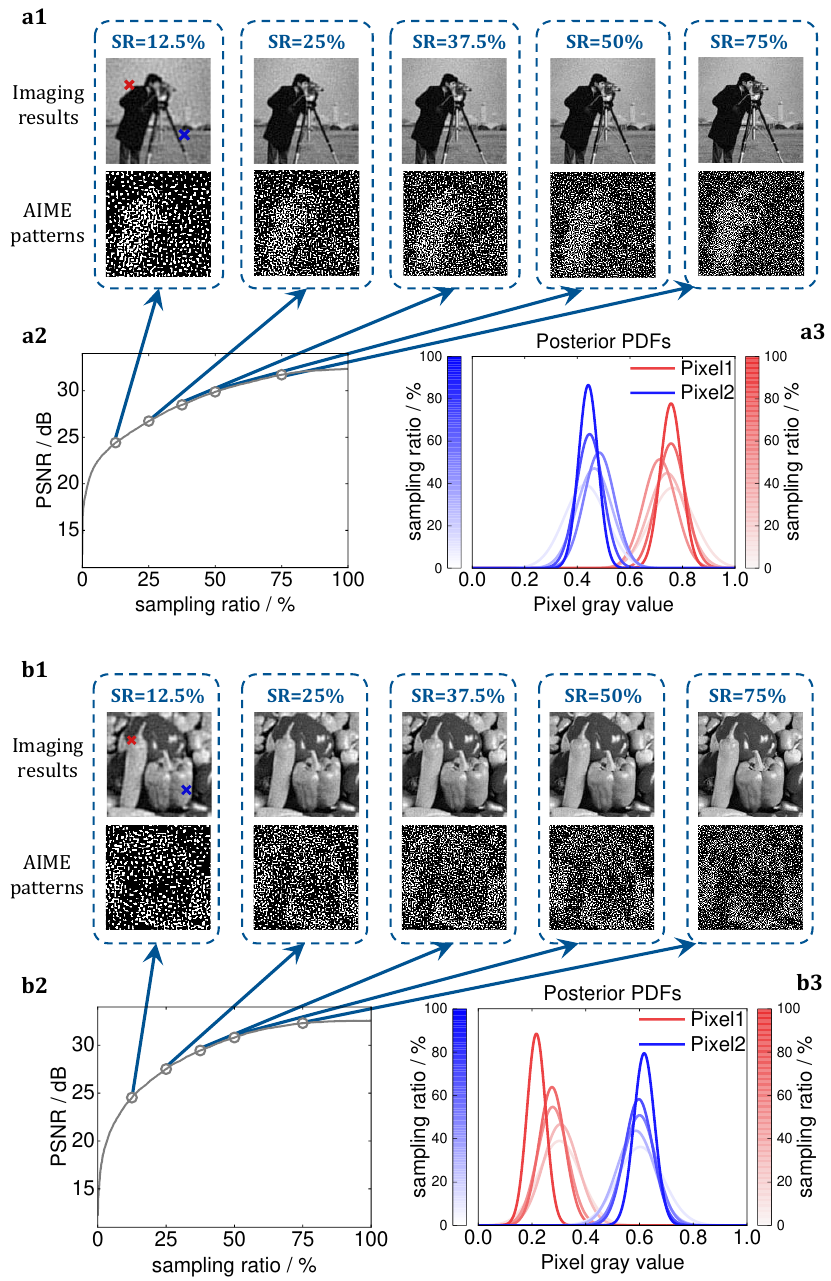}
	\caption{Simulation results for illustrating the evolution process of the proposed strategy on two different imaging scenes. \textbf{a1} and \textbf{b1} show the evolution of estimated images and corresponding optimized encoding patterns with sampling ratio increasing for two imaging scenes, respectively; \textbf{a2} and \textbf{b2} show evolution of PSNR indicators with the increase in sampling; \textbf{a3} and \textbf{b3} show evolution of estimated PDFs of two selected pixels as marked on the imaging results, respectively.}
	\label{fig:illustrate adaptive}
\end{figure}

\FloatBarrier

\begin{figure}[t]
	\centering
	\includegraphics[width=0.9\linewidth]{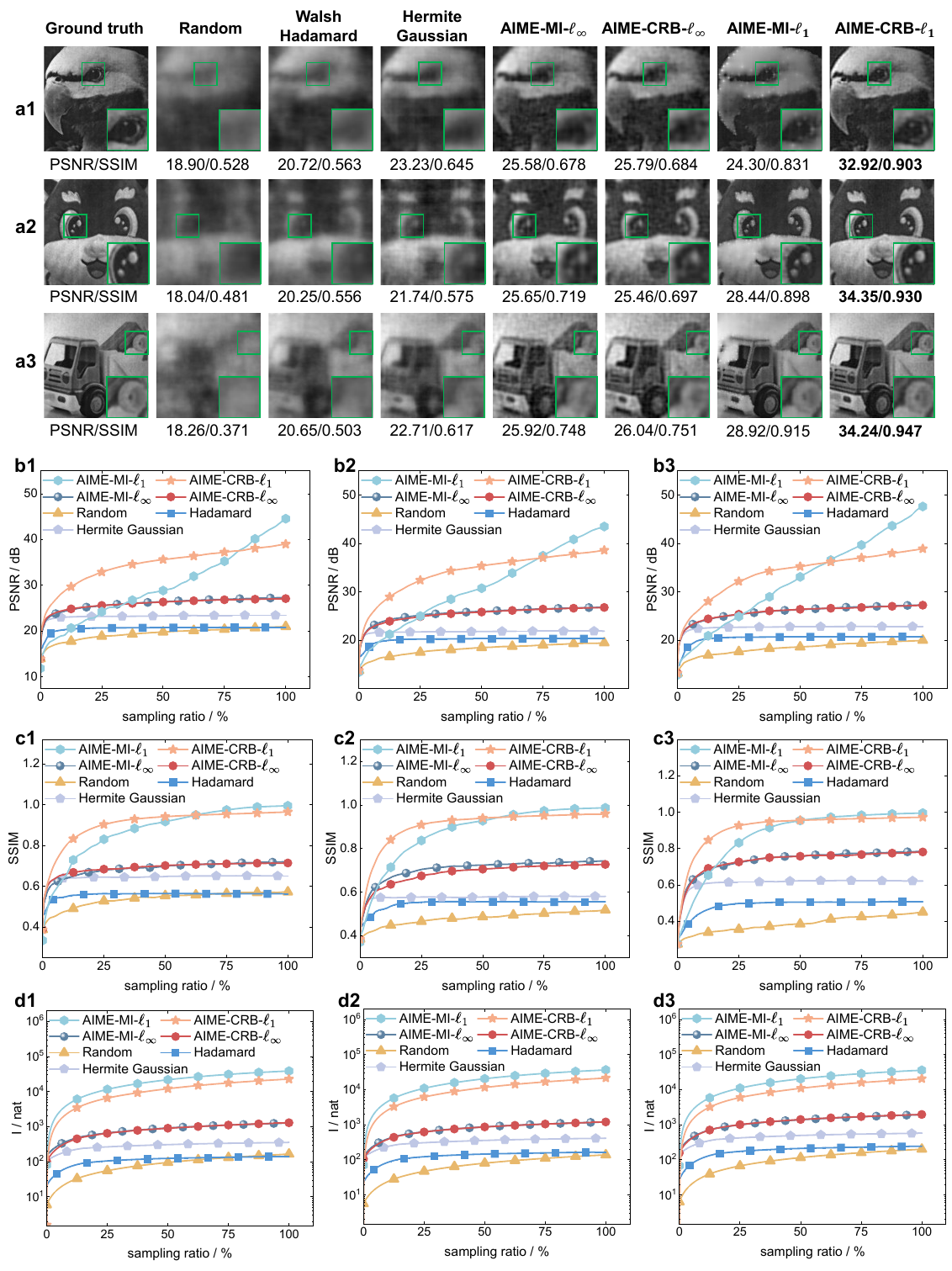}
	\caption{Experimental imaging results for comparison between AIME-$\ell_\infty$  and some typical encoding patterns. Results of AIME-$\ell_1$ that obtained under the total-energy constraint are also shown here to serve as a fundamental-limit reference. \textbf{a1}, \textbf{a2}, and \textbf{a3} show three different representative cases. The SR and SNR for them are: \textbf{a1} SNR around 18~dB, and 25~\% SR; \textbf{a2} SNR around 18.5~dB, and 37.5~\% SR; \textbf{a3} SNR around 20~dB, and 37.5~\% SR. PSNR and SSIM indicators of imaging results corresponding to three scenes as SR increases are shown in \textbf{b1, c1}; \textbf{b2, c2}, and \textbf{b3, c3}, respectively. To further quantify and confirm the information acquisition enhancement, the acquired information of different encoding strategies is calculated, and shown in \textbf{d1}, \textbf{d2}, and \textbf{d3}.}
	\label{fig:exper-result}
\end{figure}

To evaluate the practical effectiveness of AIME-$\ell_\infty$ under such realistic experimental constraints, we compare it with several representative existing encoding patterns in GI, including pseudo-thermal random patterns, Walsh-Hadamard patterns, and Hermite Gaussian patterns~\cite{huang2024compressed}. 
For theoretical reference, we also include the ideal total-energy-constrained AIME-$\ell_1$ results, which serves as a theoretical performance limit without experimental modulation constraints. 
All results are obtained by employing the same Bayesian filtering framework to ensure a fair comparison that isolates the effect of encoding design. 
Figure~\ref{fig:exper-result} summarizes the results of three representative scenes acquired under different sampling ratios (SRs) and signal-to-noise ratios (SNRs). 
The imaging results are shown as Figs.~\ref{fig:exper-result} \textbf{a1-a3}, respectively. 
For all tested scenes, both AIME-based strategies (i.e., AIME-MI and AIMe-CRB) consistently yield superior imaging quality compared with other encoding patterns under the same modulation constraints, as evidenced by clearer structural details and significantly reduced noise artifacts. 
The corresponding PSNR and SSIM values are reported under each image, which confirm that AIME achieves much better quantitative performance across all conditions. 
At the same time, a performance gap between AIME-$\ell_\infty$ and AIME-$\ell_1$ theoretical reference can be observed, suggesting the loss of achievable information imposed by hardware constraints on the encoding space. 
The evolution of PSNR and SSIM as functions of the SR is further plotted in Figs.~\ref{fig:exper-result} \textbf{b1}-\textbf{b3} and \textbf{c1}-\textbf{c3}, respectively. 
These results reveal that AIME-$\ell_\infty$ exhibits a pronounced advantage at early measurement stages, where substantial imaging quality improvements are already achieved at low SRs. 
As the number of measurements increases, AIME-$\ell_\infty$ continues to provide steady performance gains, whereas the imaging quality obtained with other encoding patterns rapidly saturates. 
This behavior demonstrates the strong compressed-sampling 
capability of AIME and its effectiveness in allocating the most informative measurements. 
The gap between AIME-$\ell_\infty$ and AIME-$\ell_1$ can be also reflected from imaging quality curves, which quantifies the performance loss induced by realistic modulation constraints.

%Notably, the reconstruction results indicate only marginal differences between the two information-optimal objectives, namely mutual-information-based and Fisher-information-based encoding. 
To further quantify the information acquisition capability of different encoding strategies, we explicitly compute the cumulative mutual information accumulated from obtained detection signals. Specifically, the cumulative mutual information is calculated as $\sum_k \frac{1}{2}\ln \left( 1 + \frac{{{\bf H}_k \widehat{\bf P}_{k-1} {\bf H}_k^\top}}{{R_k}} \right)$ and plotted on the logarithmic scale.
The corresponding results shown in Figs.~\ref{fig:exper-result} \textbf{d1-d3} represent evolution of the acquired amount of information about the unknown imaging scene during the detection process. 
It can be found that, across different SRs and noise levels, AIME consistently enables the detection signals to acquire substantially more information about the imaging object than other comparison encoding strategies. 
Also, the separation between AIME-$\ell_\infty$ and AIME-$\ell_1$ curves quantifies the information loss due to realistic modulation constraints, and mirrors the gap observed in reconstruction quality. 
Moreover, the ordering of the information curves agrees with the ordering of reconstruction quality, establishing a direct correspondence between information acquisition and imaging performance. 
This agreement confirms that the superiority of AIME arises from principled information maximization rather than other heuristic effects.

%\subsection{Analysis on the acquired information}
%%Since AIME takes detection SNR into account, it can well deal with different SNR conditions.  Fig
%\begin{figure}[t]
%	\centering
%	\includegraphics[width=0.95\linewidth]{Simulation-Result-Info-251204-1100.pdf}
%	\caption{\textbf{a1, a2} \textbf{a3} and \textbf{b1, b2 ,b3} show the cumulative mutual information and the cumulative Fisher information  under different SRs  contained in detection signals from different kinds of encoding patterns, respectively. \textbf{a1, b1}, \textbf{a2, b2} and \textbf{a3, b3} correspond to imaging scenes shown in Fig.~\ref{fig:exper-result} \textbf{a1, a2} and \textbf{a3}, respectively.
	%	}
%	\label{fig:simu-result-infoContent-different-SNRs}
%\end{figure}

%\subsection{Applying to other types of practical constraints}
%For example, if we consider that  encoding light fields are generated via DMD, the constraint 
%
%can be used, where the first term is limited by the reflectance of DMD, and the second term represents the fact that light field patterns are non-negative.
\FloatBarrier
\section{Discussion}\label{sec12}
In this work, an adaptive information-maximization encoding strategy is developed for GI. The proposed framework is intended to serve as a generally applicable approach for information-optimal encoding in computational imaging systems with the encoding-decoding mode, provided that the forward imaging model can be explicitly characterized.
In the present study, the forward model is formulated as a simplified linear model and demonstrated using a spatial-domain GI system with sequential bucket detection. 
While this simplified model enables analytical tractability and conceptual clarity, more realistic forward models accounting for optical blurring, down-sampling, and noise~\cite{huang2020ghost,huang2021ghost,liu2025comprehensive} can be further incorporated into the proposed framework. 
Moreover, since Bayesian filtering theory is not restricted to linear systems, the proposed strategy can, in principle, be generalized to computational imaging problems governed by nonlinear forward models, such as the coherent detection GI~\cite{pan2021micro}.
Beyond the form of the forward model, the framework is also not restricted to a specific sensing configuration. 
Although AIME is demonstrated with sequential bucket detection, the underlying Bayesian filtering and information optimization principles can be naturally extended to batch-measurement settings. 
In such cases, multiple measurements are acquired 
simultaneously, and the framework can be generalized to adaptive or semi-adaptive encoding design in a higher-dimensional measurement space. 
This suggests its applicability to a broader class of computational imaging systems with array detection, like the GI camera~\cite{chu2021spectral}.

In addition to the forward model, another essential component of the proposed framework is the setting of the initial image PDF. In the presented study, a general prior is adopted by imposing constraints in the image frequency domain, which allows an PDF to be defined for an entirely unknown imaging scene. 
While this choice provides a concise and robust prior for Bayesian filtering, it can be further improved by incorporating alternative analytical image prior that better capture higher-order statistical properties of natural images~\cite{hyvarinen2009natural}. 
Furthermore, with the development of deep neural models, image priors can now be effectively learned from data using deep generative models. 
In scenarios where a sufficiently large and representative training dataset is available, the proposed strategy can be naturally combined with image priors provided by trained generative models~\cite{elata2024adaptive,pinkard2024information}. 
Such learned priors are expected to offer a more expressive statistical representation than the general analytical prior employed in this work, potentially leading to further performance improvements.

Altogether, this work highlights the potential of information-theoretic encoding design as a principled approach to GI and, more broadly, to computational imaging systems. 
By explicitly modeling the information carried by detection signals under realistic physical constraints imposed by the imaging system, the proposed framework provides a systematic methodology to characterize and approach fundamental imaging limits.
The emergence of point-like encoding as information-optimal under a total-energy constraint offers an information-theoretic insight suggesting that point-to-point sampling strategies observed in biological vision~\cite{land2012animal} may reflect a principle of efficient information acquisition. 
Beyond GI, the proposed framework is expected to inspire further information-theoretic studies on the evaluation and optimization of imaging and sensing systems, with potential implications for a broad range of applications, including autonomous sensing, machine vision, and data-efficient perception tasks. 
Importantly, in numerous such systems, flexible light-field modulation platforms such as metasurfaces and programmable photonic devices are increasingly employed as core encoding components, which provides structured design space for encoding optimization. 
Incorporating this into the proposed framework is supposed to offer a principled pathway that integrates emerging hardware platforms with the information-optimal 
encoding design.

\section*{Materials and Methods}
\subsection*{Details for  encoding optimization }
The encoding optimization problem is formulated as a constrained optimization task and solved using Projected Gradient Descent (PGD), which alternates between gra-
dient descent updates and projection onto the feasible constraint set. 
During the gradient descent step, we employ the SGD optimizer with learning rates of~$1 \times 10^{9}$ and $1 \times 10^{5}$ for the mutual-information-based and CRB-based objectives, respectively. 
All remaining optimizer hyperparameters follow the default PyTorch configuration. 
To balance computational efficiency and estimation accuracy, each encoding optimization is performed using 100 PGD iterations. 
All numerical calculation are implemented in PyTorch 2.5.1 with CUDA 12.1 and executed on an NVIDIA GeForce RTX 3090 GPU. 
For a $128 \times 128$ image, one complete cycle consisting of Bayesian filtering and encoding optimization requires approximately 0.12 s.

%The optimization of the objective function is formulated as a constrained problem and solved using Projected Gradient Descent (PGD), which alternates between gradient descent and projection of parameters onto the constraint domain. Specifically, during the gradient descent step, we employ the SGD optimizer with learning rates of~$1 \times 10^{9}$ and $1 \times 10^{5}$ for the mutual information and CRLB objective function, respectively. All other hyperparameters are set to the default values in PyTorch. To balance computational efficiency and estimation quality, we perform 100 PGD iterations per encoding optimization. All simulations are implemented in PyTorch 2.5.1 with CUDA 12.1 and executed on an NVIDIA GeForce RTX 3090 GPU. One epoch comprising both Bayesian filtering and encoding optimization takes approximately 0.2~s for a~$128 \times 128$ image.

\subsection*{Measurement of parameters~$\beta$ and~$R_k$ in the experiment}
The experiment is performed with a GI system with DMD modulation. To determine 
the forward model in Eq.~(\ref{eq:z=Hx}), the scaling factor~$\beta$ and the noise statistics~$\{R_k\}$ are 
experimentally calibrated prior to imaging.

\paragraph{Calibration of~$\beta$}
A whiteboard with uniform reflectance is used as the calibration object. A set of~$M$ Hadamard encoding patterns~${\bf h}^{(1)}, \cdots, {\bf h}^{(M)}$ is projected sequentially, and for each pattern~$L$ repeated measurements are recorded, yielding detection vectors~${\bf z}^{(m)} \in {\mathcal R}^L$. Then, $\beta$ is estimated by
\begin{equation}
	\beta = \left\langle \frac{\overline{ z}^{(m)}}{\sum_i {\bf h}^{(m)}_i} \right\rangle_m
\end{equation}
where~$\overline{ z}^{(m)}$ denotes the average on elements of~${\bf z}^{(m)}$ and~$\left\langle \cdot \right\rangle_m$ represents averaging over all calibration encoding patterns.

\paragraph{Calibration of noise statistics}
As modeled in Section~\ref{sec:2.1.1}, the noise variance is parameterized as
\begin{equation*}R_k = \omega^2 {\bf h}_k^\top {\bf x}\end{equation*}
where~$\omega^2$ characterizes the proportionality between signal intensity and noise variance 
and therefore controls the effective detection SNR.
With those measurements on the whiteboard by using Hadamard patterns, estimation on~$\omega^2$ in practical conditions is as follows. For the~$m$-th encoding pattern, with~$L$ 
repeated measurements, the corresponding noise coefficient is estimated as
\begin{equation}
	{\omega^2}^{(m)} = \frac{{\sigma^2}^{(m)}}{{\bar z}^{(m)}}
\end{equation}
where ${\sigma^2}^{(m)} = \frac{\sum_{l=1}^{L} \left( z_l^{(m)} - \bar{z}^{(m)} \right)^2}{L}$ denotes the sample variance of the detection signal.
The calibrated noise parameter is then taken as the ensemble average
\begin{equation}
	\omega^2_{\text{eff}} = \left\langle{\omega^2}^{(m)} \right\rangle_m.
\end{equation}
By adjusting the illumination intensity and repeating the calibration procedure, noise 
parameters corresponding to different operating SNR regimes are obtained. In each regime, the calibrated~$\omega^2_{\text{eff}}$ is used consistently in both mutual-information-based and 
CRB-based encoding optimization.

\subsection*{Experimental setting in comparison between GI with AIME-$\ell_1$ and conventional imaging}
For experimental comparison between GI with AIME-$\ell_1$ and ``fixed point-to-point'' conventional imaging, both strategies are implemented using the same DMD-based GI system to ensure identical optical and detection conditions. 
For AIME-$\ell_1$, the optimized encoding patterns are physically realized with DMD. 
To improve illumination stability, each~$4\times4$ block of DMD micro-mirrors is grouped to form a single effective illumination pixel. 
The optimized patterns are mapped onto these grouped pixels and sequentially projected onto the scene, and the corresponding bucket detection signals are recorded. 
For conventional imaging, the same DMD-based GI system is operated in a raster-scanning mode. 
Specifically, a sequence of point-like illumination patterns is loaded onto the DMD, where each grouped~$4\times4$ block is illuminated one at a time following a fixed scanning order. 
The detected bucket signal associated with each illumination position is directly assigned as the grayscale value of the corresponding image pixel, thereby forming the image.

In the experiment under low-SNR condition, additional background noise is introduced using scattering light generated by a rotating ground glass. 
This produces a dynamic background whose temporal fluctuation, after bucket detection and time averaging, can be well approximated as additive Gaussian noise with constant variance.
To quantify the noise level, $L$ measurements are acquired using a completely dark pattern. Let $z_{\text{back}}^{(l)}$ denote the $l$-th recorded signal, and $\bar{z}_{\text{back}}$ its sample mean over~$L$ measurements, the background noise variance $\sigma_{\text{back}}^2$ is estimated as:
\begin{equation}
	\sigma_{\text{back}}^2 = \frac{\sum_{l=1}^{L} \left( z_{\text{back}}^{(l)} - \bar{z}_{\text{back}} \right)^2}{L - 1},
\end{equation}
With experimental detection signal~$\{z_k\}$ corresponding to both AIME and raster-scanning, the experimental SNR is estimated by
\begin{equation}
	\text{SNR} = 10 \log_{10} \left( \frac{{\bar z} - \bar{z}_{\text{back}}}{\sigma_{\text{back}}} \right).
\end{equation}
where~$\bar z$ is the average intensity of detection signals.

\subsection*{Setting the initial image PDF}
In the proposed strategy, an initial Gaussian PDF~$p({\bf x})={\mathcal N}({\bf x}_0, {\bf P}_0)$ of the object's image is set. To specify the parameters~${\bf x}_0$ and~${\bf P}_0$,we adopt a commonly used statistical model for natural images, in which the frequency-domain coefficients~$\left\{ a({\bf  f})  \right\}$ of images~\cite{torralba2003statistics} follow a Laplacian distribution
\begin{equation}
	p(a({\bf f})) = \frac{1}{2 \lambda({\bf f})} \exp \left[ - \frac{\left|a({\bf f}) - \mu ({\bf f})\right|}{\lambda({\bf f})} \right],
	\label{eq:lapla}
\end{equation}
where~$\mu ({\bf f})$ for those non-zero-frequency components is set as 0, $\mu({\bf f}_0) = \mu_0$ representing the zero-frequency amplitude is estimated with a uniform illumination, and  $\lambda({\bf f}) \propto (|{\bf f}|)^{-\gamma}$ with~$\gamma \in [1, 2]$.
It should be noted that, owing to the high dimensonality of image signals and the central limit theorem, although the frequency-domain coefficients are modeled as independent Laplacian variables, the image-domain pixels, being weighted superpositions of a large number of such coefficients, can be well approximated to be subject to a Gaussian distribution. 
Based on this, the parameters of the initial Gaussian PDF are set as
\begin{subequations}
	\begin{equation}
		{\bf x}_0 = \mu_0 {\bf 1} 
	\end{equation}
	\begin{equation}
		{\bf P}_0 = 2\Psi^\top  {\bf \Sigma}^2 \Psi
	\end{equation}
\end{subequations}
where~$\bf 1$ is an all-$1$ vector, ${\bf \Sigma}$ is a diagonal matrix whose diagonal elements consist of $\left\{\lambda({\bf f}_0), \lambda({\bf f}_1), \cdots \right\}$ and~$\bf \Psi$ denotes DCT matrix.

\backmatter

%\section*{Supplementary information}
%
%Materials include the Supplementary Text, Figs.~S1 to S10, and Reference~[1]--[3].

%Please refer to Journal-level guidance for any specific requirements.

%\bmhead{Acknowledgements}
%
%Acknowledgements are not compulsory. Where included they should be brief. Grant or contribution numbers may be acknowledged.
%
%Please refer to Journal-level guidance for any specific requirements.

\section*{Data availability}
Data are available from the corresponding author on request.

\section*{Funding}
This work is supported by the National Natural Science Foundation of China~(62201165, 62475270) and the National Key Research and Development Program of China~(2024YFF0505601).

\section*{Author contributions}
C. Hu and S. Han conceived the idea and supervised the study. J. Sun and C. Hu performed most of the theoretical derivation.  J. Sun and C. Hu conducted the experiment under discussion with Z. Bo, Z. Liu and M. Chen.  L. Du and W. Liu contribute to code modification for simulation and experimental data analyses.  J. Sun and C. Hu analyzed the experimental data, and wrote the original draft. All authors discussed the results and contributed to editing and revising the paper.

\section*{Competing interests}
The authors declare no competing interests.

%Some journals require declarations to be submitted in a standardised format. Please check the Instructions for Authors of the journal to which you are submitting to see if you need to complete this section. If yes, your manuscript must contain the following sections under the heading `Declarations':

%\bmhead{Funding} 
%
%\begin{itemize}
%\item Funding
%\item Conflict of interest/Competing interests (check journal-specific guidelines for which heading to use)
%\item Ethics approval and consent to participate
%\item Consent for publication
%\item Data availability 
%\item Materials availability
%\item Code availability 
%\item Author contribution
%\end{itemize}

%\noindent
%If any of the sections are not relevant to your manuscript, please include the heading and write `Not applicable' for that section. 

%%===================================================%%
%% For presentation purpose, we have included        %%
%% \bigskip command. Please ignore this.             %%
%%===================================================%%
\bigskip
%\begin{flushleft}%
%Editorial Policies for:
%
%\bigskip\noindent
%Springer journals and proceedings: \url{https://www.springer.com/gp/editorial-policies}
%
%\bigskip\noindent
%Nature Portfolio journals: \url{https://www.nature.com/nature-research/editorial-policies}
%
%\bigskip\noindent
%\textit{Scientific Reports}: \url{https://www.nature.com/srep/journal-policies/editorial-policies}
%
%\bigskip\noindent
%BMC journals: \url{https://www.biomedcentral.com/getpublished/editorial-policies}
%\end{flushleft}
\newpage
\begin{appendices}
\renewcommand{\thefigure}{A\arabic{figure}}

\section{Theoretical derivation of the objective function for AIME-MI and AIME-CRB}
In the proposed strategy, the mutual information~${\rm I}(z_{k}, {\bf x} \mid {\bf Z}_{k-1})$  is specifically calculated as follows
% \begin{equation}
	% 	\begin{array}{c}
		% 		I(\hat x_k^ - ,{z_k}) = H(\hat x_k^ - ) - H(\hat x_k^ - |{z_k})\\
		% 		= H(\hat x_k^ - ) - H({{\hat x}_k})\\
		% 		= \frac{1}{2}\log \frac{{\det P_k^ - }}{{\det {P_k}}}\\
		% 		= \frac{1}{2}\log \frac{1}{{\det (I - {K_k}{H_k})}}\\
		% 		= \frac{1}{2}\log \frac{1}{{\det [I - P_k^ - H_k^\top{{({H_k}P_k^ - H_k^\top + {H_k}\widehat X_k^ - )}^{ - 1}}{H_k}]}}
		% 	\end{array}
	% \end{equation}
\begin{equation}
	\begin{aligned}
		{\rm{I}}\left( {{z_k},{\bf{x}}\left| {{{\bf{Z}}_{k - 1}}} \right.} \right) =&~ {\rm H}({\bf x} \mid {\bf Z}_{k-1}) - {\rm H}({\bf x} \mid {\bf Z}_{k})\\
		=&~ \left[\frac{N}{2}(1+\log 2 \pi)+\frac{1}{2} \log \left|\widehat{\bf{P}}_{k-1}\right| \right] - \left[   \frac{N}{2}(1+\log 2 \pi)+\frac{1}{2} \log \left|\widehat{\bf{P}}_k\right| \right]\\
		=&~ \frac{1}{2} \log \frac{\left|\widehat{\bf{P}}_{k-1}\right|}{\left|\widehat{\bf{P}}_k\right|}=\frac{1}{2} \log \frac{\left|\widehat{\bf{P}}_{k-1}\right|}{\left|\left(\boldsymbol{I}-\beta {\bf{K}}_k {\bf{h}}_k^{\top}\right) \widehat{\bf{P}}_{k-1}\right|}\\
		=&~ \frac{1}{2}\log \frac{1}{{\left| {{\bf{I}} - \beta {{\bf{K}}_k}{{\bf{h}}_k^\top}} \right|}}.
	\end{aligned}
\end{equation}
where~${\rm H}(\cdot)$ is the information entropy, which can be expressed as $\frac{1}{2}\log {(2\pi e)^N}\det \sum $ for a $N$-variable Gaussian distribution, $\sum$ is the covariance matrix of the corresponding variables, and~$|{\bf P}|$ denotes the determinant of matrix~$\bf P$. By using the property of matrix determinant~$\left|{\bf I} + {\bf u}{\bf v}^\top\right| = 1 + {\bf u}^\top{\bf v}$, the above expression can be further simplified
\begin{equation}
	\begin{aligned}
		{\rm{I}}\left( {{z_k},{\bf{x}}\mid {{\bf{Z}}_{k - 1}}} \right)\\
		=&~ \frac{1}{2}\log \frac{1}{{1 - \beta {\bf{K}}_k^ \top {{\bf{h}}_k}}}\\
		=&~ \frac{1}{2}\log \frac{1}{{1 - {{\left( {{\beta ^2}{{\bf{h}}_k}{{\widehat {\bf{P}}}_{k - 1}}{\bf{h}}_k^ \top  + {R_k}} \right)}^{ - 1}}{\beta ^2}{{\bf{h}}_k}{{\widehat {\bf{P}}}_{k - 1}}{{\bf{h}}_k}}}\\
		=&~ \frac{1}{2}\log \frac{1}{{1 - \frac{{{\beta ^2}{{\bf{h}}_k}{{\widehat {\bf{P}}}_{k - 1}}{\bf{h}}_k^ \top }}{{{\beta ^2}{{\bf{h}}_k}{{\widehat {\bf{P}}}_{k - 1}}{\bf{h}}_k^ \top  + {R_k}}}}}\\
		=&~ \frac{1}{2}\log \frac{{{\beta ^2}{{\bf{h}}_k}{{\widehat {\bf{P}}}_{k - 1}}{\bf{h}}_k^ \top  + {R_k}}}{{{R_k}}}
	\end{aligned}
\end{equation}

The mean CRB is calculated as follows
\begin{equation}
	\begin{aligned}
		{\rm mCRB} =&~ {\rm Tr} ({\widehat{\bf P}}_k) \\
		=&~ \operatorname{Tr}\left(\left(\bf{I}-\beta {\bf{K}}_k {\bf{h}}_k^{\top}\right) \widehat{\bf{P}}_{k-1}\right)\\
		=&~ \operatorname{Tr}\left(\widehat{\bf{P}}_{k-1}\right)-\operatorname{Tr}\left(\beta {\bf{K}}_k {\bf{h}}_k^{\top} \widehat{\bf{P}}_{k-1}\right) \\
		=&~ \operatorname{Tr}\left(\widehat{\bf{P}}_{k-1}\right)-\operatorname{Tr}\left(\beta^2 \widehat{\bf{P}}_{k-1} {\bf{h}}_{k}\left(\beta^2 {\bf{h}}_k^{\top} \widehat{\bf{P}}_{k-1} {\bf{h}}_{k}+R_k\right)^{-1} {\bf{h}}_k^{\top} \widehat{\bf{P}}_{k-1}\right) \\
	\end{aligned}
\end{equation}
where~${\rm Tr}({\bf P})$ denotes the trace of matrix~$\bf P$.
By using the property of matrix trace, the above expression can be rewritten as
\begin{equation} 
	\begin{aligned}
		{\rm mCRB} =&~ \operatorname{Tr}\left(\widehat{\bf{P}}_{k-1}\right)-\beta^2 \operatorname{Tr}\left({\bf{h}}_k^{\top} \widehat{\bf{P}}_{k-1}^2 {\bf{h}}_{k}\left(\beta^2 {\bf{h}}_k^{\top} \widehat{\bf{P}}_{k-1} {\bf{h}}_{k}+R_k\right)^{-1}\right) \\
		=&~ \operatorname{Tr}\left(\widehat{\bf{P}}_{k-1}\right)-\operatorname{Tr}\left(\frac{{\bf{h}}_k^{\top} \widehat{\bf{P}}_{k-1}^2 {\bf{h}}_{k}}{{\bf{h}}_k^{\top} \widehat{\bf{P}}_{k-1} {\bf{h}}_{k}+\frac{R_k}{\beta^2}}\right) \\
		=&~ \operatorname{Tr}\left(\widehat{\bf{P}}_{k-1}\right)-\frac{{\bf{h}}_k^{\top} \widehat{\bf{P}}_{k-1}^2 {\bf{h}}_{k}}{{\bf{h}}_k^{\top} \widehat{\bf{P}}_{k-1} {\bf{h}}_{k}+\frac{R_k}{\beta^2}}
	\end{aligned}
\end{equation}

\section{Proof of Theorem 1}
Here we give the proof of Theorem 1 in the article, which indicates that the optimal encoding pattern of AIME-MI under the total-energy constraint correspond to a scanning point.
\begin{theorem}[Theoretical solution of AIME-MI under the total-energy constraint]\label{thm1}
	Consider the following optimization problem:
	\begin{equation}
		\begin{aligned}
			\max _{\mathbf{h}_k} & \frac{\mathbf{h}_k^{\top} \mathbf{P}_{k-1} \mathbf{h}_k}{\mathbf{h}_k^{\top} \mathbf{x}}, \\
			\rm { s.t. } & \left\|\mathbf{h}_k\right\|_1=C, \mathbf{h}_k \succeq \mathbf{0},
		\end{aligned}
		\label{eq:AIME}
	\end{equation}
	where~$\mathbf{P}_{k-1}$ is a positive definite covariance matrix, $\mathbf{x}$ is a non-negative vector, and~$C>0$ is a constant.
	Then the above problem admits an analytical optimal solution. Specifically, the optimal encoding vector is given by
	\begin{equation}
		\hat{\mathbf{h}}_k=\mathbf{e}_{\hat{i}}, \quad \hat{i}=\arg \max _i \frac{\left(\mathbf{P}_{k-1}\right)_{i i}}{x_i},
	\end{equation}
	where~$\mathbf{e}_{\hat{i}}$ denotes the standard basis vector whose $\hat{i}$-th element is one and all other elements are zero.
\end{theorem}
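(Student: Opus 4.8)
The plan is to reduce the stated optimization problem to maximizing a \emph{quasi-convex} ratio over a simplex and then invoke the fact that such a maximum is attained at a vertex of the simplex --- which is exactly a point-like pattern $\mathbf{e}_{\hat i}$ --- with a self-contained variant of the same argument using only the positive-semidefiniteness of $\mathbf{P}_{k-1}$ and the Cauchy--Schwarz inequality. First I would rewrite the feasible set: since $\mathbf{h}_k\succeq\mathbf{0}$ one has $\|\mathbf{h}_k\|_1=\mathbf{1}^\top\mathbf{h}_k=C$, so the constraints describe the scaled simplex $S_C=\{\mathbf{h}\succeq\mathbf{0}:\mathbf{1}^\top\mathbf{h}=C\}$, a compact polytope with vertices $C\mathbf{e}_1,\dots,C\mathbf{e}_n$. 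The objective $f(\mathbf{h})=(\mathbf{h}^\top\mathbf{P}_{k-1}\mathbf{h})/(\mathbf{h}^\top\mathbf{x})$ is homogeneous of degree one, so the energy budget $C$ fixes the scale but not the optimal direction (strictly speaking the optimizer is $C\mathbf{e}_{\hat i}$, normalized to $\mathbf{e}_{\hat i}$ in the statement), and it is finite on $S_C$ precisely when $\mathbf{x}\succ\mathbf{0}$ --- if some $x_i=0$ then $(\mathbf{P}_{k-1})_{ii}>0$ forces $f\to+\infty$ as $\mathbf{h}\to C\mathbf{e}_i$, so that case is read with the convention $c/0=+\infty$. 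Writing $M=\max_i(\mathbf{P}_{k-1})_{ii}/x_i$, attained at $\hat i$, one has $f(C\mathbf{e}_i)=C(\mathbf{P}_{k-1})_{ii}/x_i$, so among the vertices the best is $C\mathbf{e}_{\hat i}$ with value $CM$; it then remains to show $f(\mathbf{h})\le CM$ for every feasible $\mathbf{h}$.

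There are two natural ways to close this gap. The conceptual one: $f$ is the ratio of a convex function (the numerator, since $\mathbf{P}_{k-1}\succ\mathbf{0}$) to a positive affine function, hence quasi-convex on $\{\mathbf{h}^\top\mathbf{x}>0\}$, because each sublevel set $\{f\le\alpha\}=\{\mathbf{h}^\top\mathbf{P}_{k-1}\mathbf{h}-\alpha\,\mathbf{h}^\top\mathbf{x}\le0\}$ is convex; expressing an arbitrary $\mathbf{h}\in S_C$ as a convex combination of vertices and iterating the defining inequality $f(\lambda\mathbf{a}+(1-\lambda)\mathbf{b})\le\max\{f(\mathbf{a}),f(\mathbf{b})\}$ gives $f(\mathbf{h})\le\max_i f(C\mathbf{e}_i)=CM$. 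The elementary one avoids quasi-convexity altogether: for $\mathbf{h}\in S_C$, positive semidefiniteness gives $(\mathbf{P}_{k-1})_{ij}\le\sqrt{(\mathbf{P}_{k-1})_{ii}(\mathbf{P}_{k-1})_{jj}}\le M\sqrt{x_ix_j}$, so
\[
\mathbf{h}^\top\mathbf{P}_{k-1}\mathbf{h}=\sum_{i,j}h_ih_j(\mathbf{P}_{k-1})_{ij}\le M\Big(\sum_i h_i\sqrt{x_i}\Big)^2\le M\Big(\sum_i h_i\Big)\Big(\sum_i h_ix_i\Big)=CM\,\mathbf{h}^\top\mathbf{x},
\]
the last step being Cauchy--Schwarz applied to $h_i\sqrt{x_i}=\sqrt{h_i}\cdot\sqrt{h_ix_i}$; dividing by $\mathbf{h}^\top\mathbf{x}>0$ yields $f(\mathbf{h})\le CM$, with equality at $C\mathbf{e}_{\hat i}$. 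This second form also makes uniqueness transparent when $\hat i$ is unique, since for $\mathbf{P}_{k-1}\succ\mathbf{0}$ the off-diagonal matrix inequality above is strict, forcing the optimal $\mathbf{h}$ to have single-index support.

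I expect the main obstacle to be conceptual rather than computational: one has to notice the degree-one homogeneity --- which is precisely why the $\ell_1$ energy budget pins the optimum to the simplex rather than letting it escape to infinity --- to treat the degenerate case $x_i=0$ honestly, and, on the first route, to invoke the correct extremal principle (a quasi-convex function over a polytope attains its \emph{maximum} at a vertex), which is the mirror image of the more familiar statement for convex functions. Once these points are settled, the remaining estimates are routine.
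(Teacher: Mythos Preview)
Your proposal is correct. The conceptual route---observing that the ratio of a convex quadratic to a positive affine function is quasi-convex and therefore attains its maximum on the simplex at a vertex---is essentially the paper's argument, which arrives at the same vertex principle through a fractional-programming (Dinkelbach-type) detour: it converts the ratio problem into maximizing the convex quadratic $\mathbf{h}_k^{\top}\mathbf{P}_{k-1}\mathbf{h}_k-\alpha\,\mathbf{h}_k^{\top}\mathbf{x}$ over the simplex for the optimal $\alpha$, completes the square, and then invokes the extreme-point property. Your framing is slightly more direct since it bypasses the parametric reformulation. Your second, elementary route via the PSD off-diagonal bound $(\mathbf{P}_{k-1})_{ij}\le\sqrt{(\mathbf{P}_{k-1})_{ii}(\mathbf{P}_{k-1})_{jj}}$ and Cauchy--Schwarz is genuinely different from anything in the paper: it is fully self-contained, avoids any appeal to quasi-convexity or polytope geometry, and yields the sharp bound $f(\mathbf{h})\le CM$ in two lines. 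It also handles the degenerate case $x_i=0$ and the uniqueness question explicitly, neither of which the paper addresses. The trade-off is that the paper's convexity argument generalizes more readily to other ratio objectives (e.g., the CRB criterion), whereas your elementary inequality is tailored to this specific numerator--denominator pair.
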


\begin{proof}[Proof of Theorem~{\upshape\ref{thm1}}]
	We first note that the objective function is a ratio of two positive homogeneous functions of~${\bf h}_k$. Following standard results in fractional programming~\cite{schaible1976fractional}, maximizing the ratio
	\begin{equation}
		\frac{\mathbf{h}_k^{\top} \mathbf{P}_{k-1} \mathbf{h}_k}{\mathbf{h}_k^{\top} \mathbf{x}},
	\end{equation}
	is equivalent to finding the largest scalar~$\alpha$ such that
	\begin{equation}
		\mathbf{h}_k^{\top} \mathbf{P}_{k-1} \mathbf{h}_k - \alpha \mathbf{h}_k^{\top} \mathbf{x} \ge 0
	\end{equation}
	admits a feasible solution of~${\bf h}_k$ under the same constraints.
	This lead to an equivalent parametric optimization problem as
	\begin{equation}
		\max _{\mathbf{h}_k}\left(\mathbf{h}_k^{\top} \mathbf{P}_{k-1} \mathbf{h}_k-\alpha \mathbf{h}_k^{\top} \mathbf{x}\right), \quad \text { s.t. }\left\|\mathbf{h}_k\right\|_1=C,~ \mathbf{h}_k \succeq \mathbf{0},
	\end{equation}
	where 
	\begin{equation}
		\alpha=\max _{\mathbf{h}_k} \frac{\mathbf{h}_k^{\top} \mathbf{P}_{k-1} \mathbf{h}_k}{\mathbf{h}_k^{\top} \mathbf{x}}.
	\end{equation}	 
	This can be further rewritten to a standard quadratic form
	\begin{equation}
		\mathbf{h}_k^{\top} \mathbf{P}_{k-1} \mathbf{h}_k-\alpha \mathbf{h}_k^{\top} \mathbf{x}=\left(\mathbf{h}_k-\frac{\alpha}{2} \mathbf{P}_{k-1}^{-1} \mathbf{x}\right)^{\top} \mathbf{P}_{k-1}\left(\mathbf{h}_k-\frac{\alpha}{2} \mathbf{P}_{k-1}^{-1} \mathbf{x}\right)-\frac{\alpha^2}{4} \mathbf{x}^{\top} \mathbf{P}_{k-1}^{-1} \mathbf{x}.
		\label{eq:quadratic}
	\end{equation}
	Since~$\mathbf{P}_{k-1}$ is positive definite, the objective is convex with respect to~${\bf h}_k$. Under the constraints~$\left\|\mathbf{h}_k\right\|_1=C,$ and~$\mathbf{h}_k \succeq \mathbf{0}$, the feasible region is a simplex, making the extreme points correspond to scaled standard basis vectors~$\left\{C {\bf e}_i\right\}$. Therefore, the maximum of the objective function shown as Eq.~(\ref{eq:quadratic}) is attained at one of the extreme points, implying that the optimal solution~${\hat {\bf h}_k}$ correspond to a standard basis vector.
	
	Substituting~${\bf h}_k = {\bf e}_i$ into the original objective yields
	\begin{equation}
		\frac{\mathbf{h}_k^{\top} \mathbf{P}_{k-1} \mathbf{h}_k}{\mathbf{h}_k^{\top} \mathbf{x}}=\frac{\mathbf{e}_i^{\top} \mathbf{P}_{k-1} \mathbf{e}_i}{\mathbf{e}_i^{\top} \mathbf{x}}=\frac{\left(\mathbf{P}_{k-1}\right)_{i i}}{x_i}.
	\end{equation}
	From this, the optimal index~$\hat i$ can be obtained by
	\begin{equation}
		\hat{i}=\arg \max _i \frac{\left(\mathbf{P}_{k-1}\right)_{i i}}{x_i},
	\end{equation}
	which leads to the optimal~$\hat{\bf h}_k = {\bf e}_{\hat i}$ and complete the proof.
\end{proof}

\section{Definition of the detection signal-to-noise ratio}
Here we would like to discuss about the definition of the detection SNR in GI systems. In the field of GI, the bucket detection SNR~(BSNR)~\cite{du2025information} is calculated as
\begin{equation}
	{\rm BSNR} = 10 \log_{10} \left(\frac{\overline{z^2} - \overline{z}^2}{\sigma^2}\right),
\end{equation}
to measure the level of signal fluctuations. However, this definition only apply to GI systems with randomly fluctuating light fields, and it is not directly related to the detection process.

In the field of signal processing, the SNR is defined as
\begin{equation}
	{\rm SNR} = 10 \log_{10} \left(\frac{P_s}{P_n}\right) = 10 \log_{10} \left(\frac{|z|^2}{|n|^2}\right),
\end{equation}
where~$P_s$ denotes the signal power and~$P_n$ is the noise power.

Different from traditional signal processing which deal with electronic signals, in GI the signal is in fact an optical one. Since the detector directly measures the intensity of light fields, the recorded intensity signal can already be considered as the power. Hence, 
in this study we use the definition as~\cite{bian2024broadband}
\begin{equation} \label{eq: SNR def}
	{\rm SNR} = 10 \log_{10} \left(\frac{\overline{z}}{\sigma}\right),
\end{equation}
where~$\overline{z}$ represents the average value of the detection signal and~$\sigma$ denotes the average noise level.

\section{Detailed comparison results between GI with AIME-$\ell_1$ and fixed point-to-point conventional imaging under different detection SNRs}
Figures~\ref{fig:Exper1-1} to~\ref{fig:Exper1-4} show experimental results of four representative imaging scenes.
\begin{figure}[ht]
	\centering
	\includegraphics[width=0.7\textwidth]{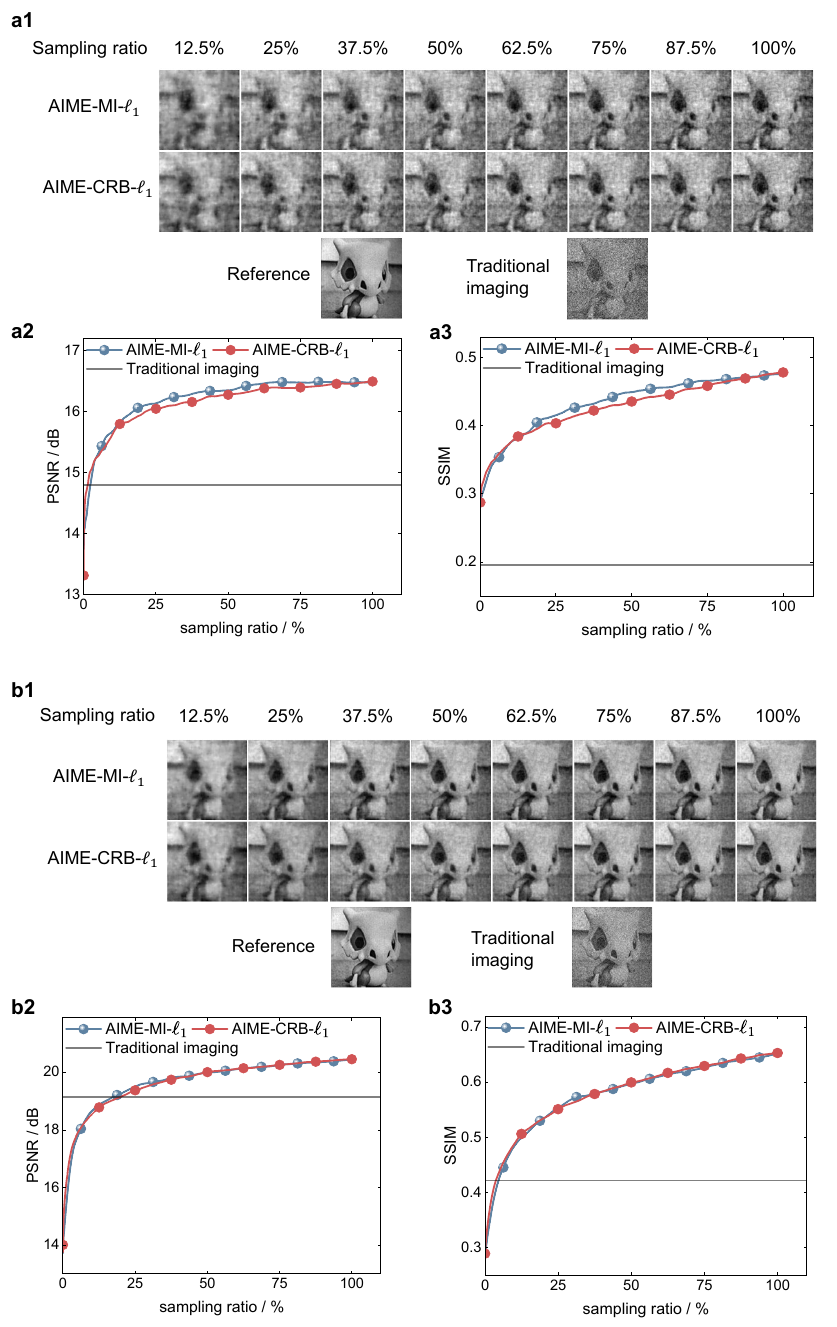}
	\caption{Comparison between GI with AIME-$\ell_1$ and conventional imaging on scene 1 under varying SRs and SNRs, including imaging results and imaging quality indicators. \textbf{a1, a2, a3} are results under~$4.85$ dB SNR, \textbf{b1, b2, b3} are results under~$7.03$ dB SNR.}
	\label{fig:Exper1-1}
\end{figure}

\begin{figure}[ht]
	\centering
	\includegraphics[width=0.78\textwidth]{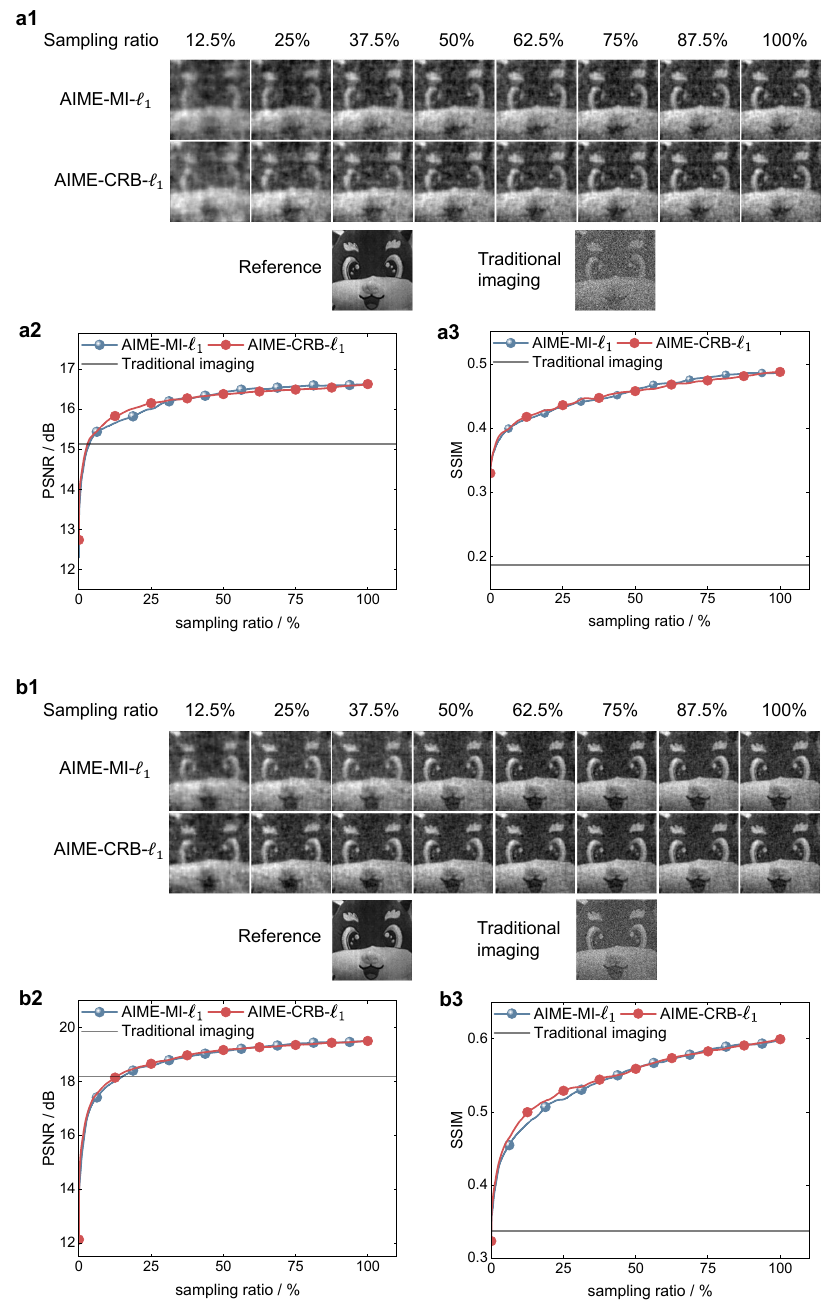}
	\caption{Comparison between GI with AIME-$\ell_1$ and conventional imaging on scene 1 under varying SRs and SNRs, including imaging results and imaging quality indicators. \textbf{a1, a2, a3} are results under~$3.28$ dB SNR, \textbf{b1, b2, b3} are results under~$5.79$ dB SNR.}
	\label{fig:Exper1-2}
\end{figure}

\begin{figure}[ht]
	\centering
	\includegraphics[width=0.78\textwidth]{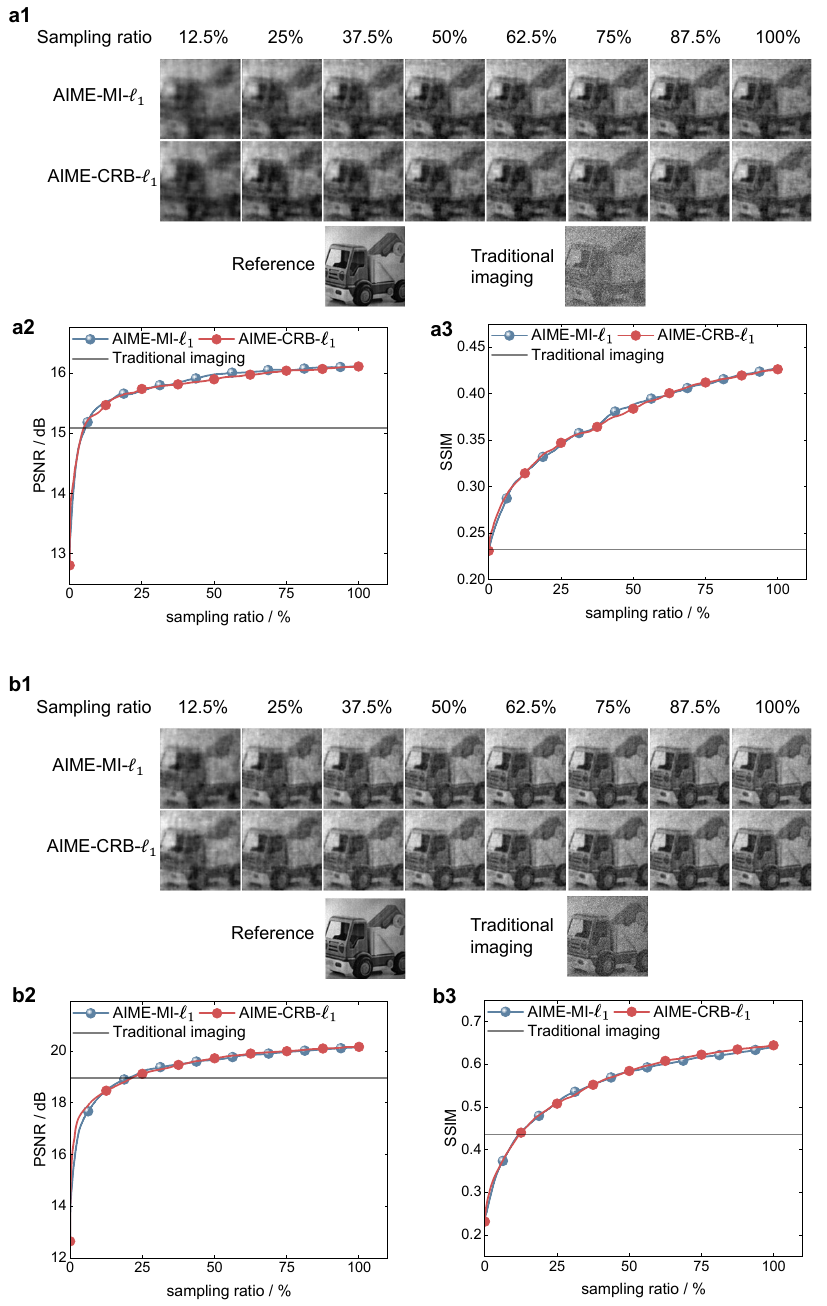}
	\caption{Comparison between GI with AIME-$\ell_1$ and conventional imaging on scene 1 under varying SRs and SNRs, including imaging results and imaging quality indicators. \textbf{a1, a2, a3} are results under~$3.89$ dB SNR, \textbf{b1, b2, b3} are results under~$5.79$ dB SNR.}
	\label{fig:Exper1-3}
\end{figure}

\begin{figure}[ht]
	\centering
	\includegraphics[width=0.78\textwidth]{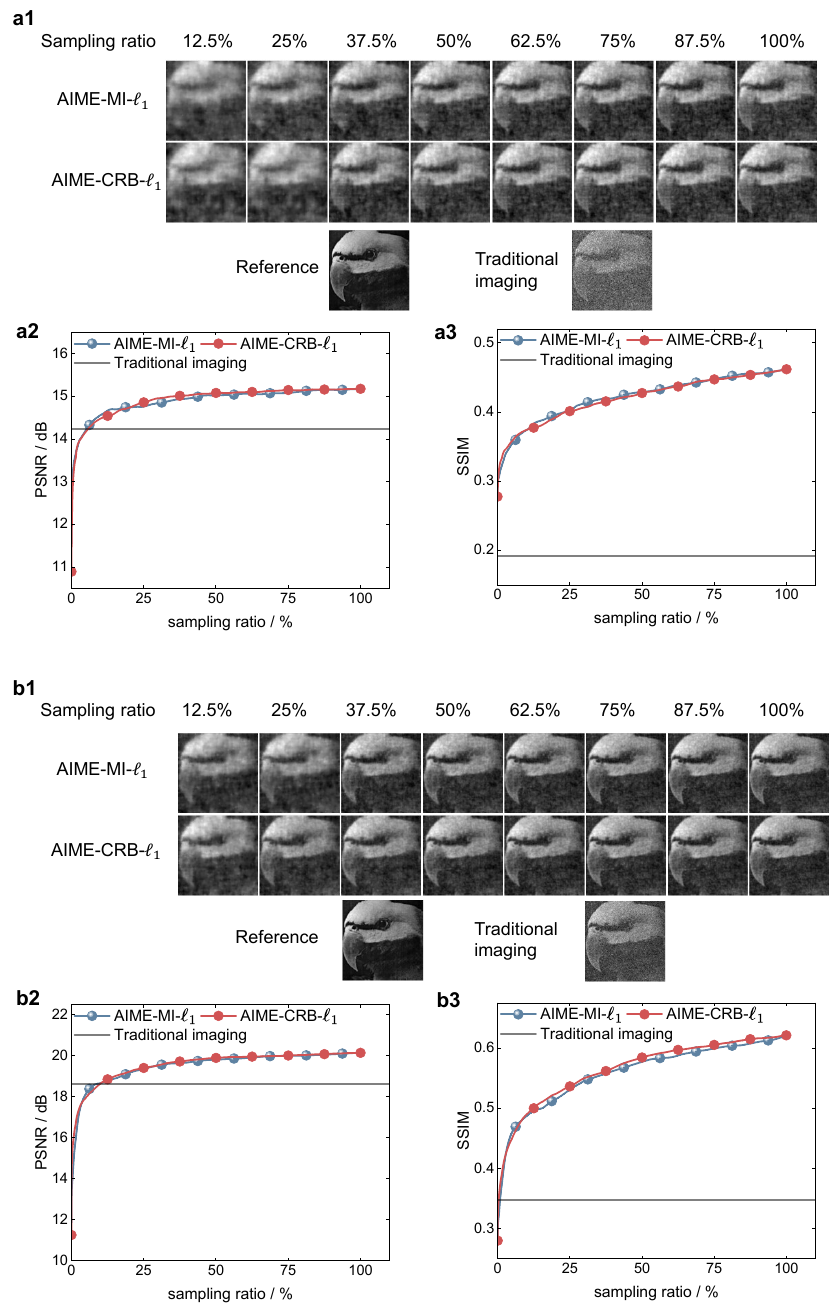}
	\caption{Comparison between GI with AIME-$\ell_1$ and conventional imaging on scene 1 under varying SRs and SNRs, including imaging results and imaging quality indicators. \textbf{a1, a2, a3} are results under~$2.09$ dB SNR, \textbf{b1, b2, b3} are results under~$4.30$ dB SNR.}
	\label{fig:Exper1-4}
\end{figure}

\FloatBarrier
\section{Detailed comparison between GI with AIME and existing encoding patterns}
Figures~\ref{fig:Exper2-1} to~\ref{fig:Exper2-4} show experimental results of four representative imaging scenes for comparing GI with AIME and existing encoding patterns.
\begin{figure}[ht]
	\centering
	\includegraphics[width=0.78\textwidth]{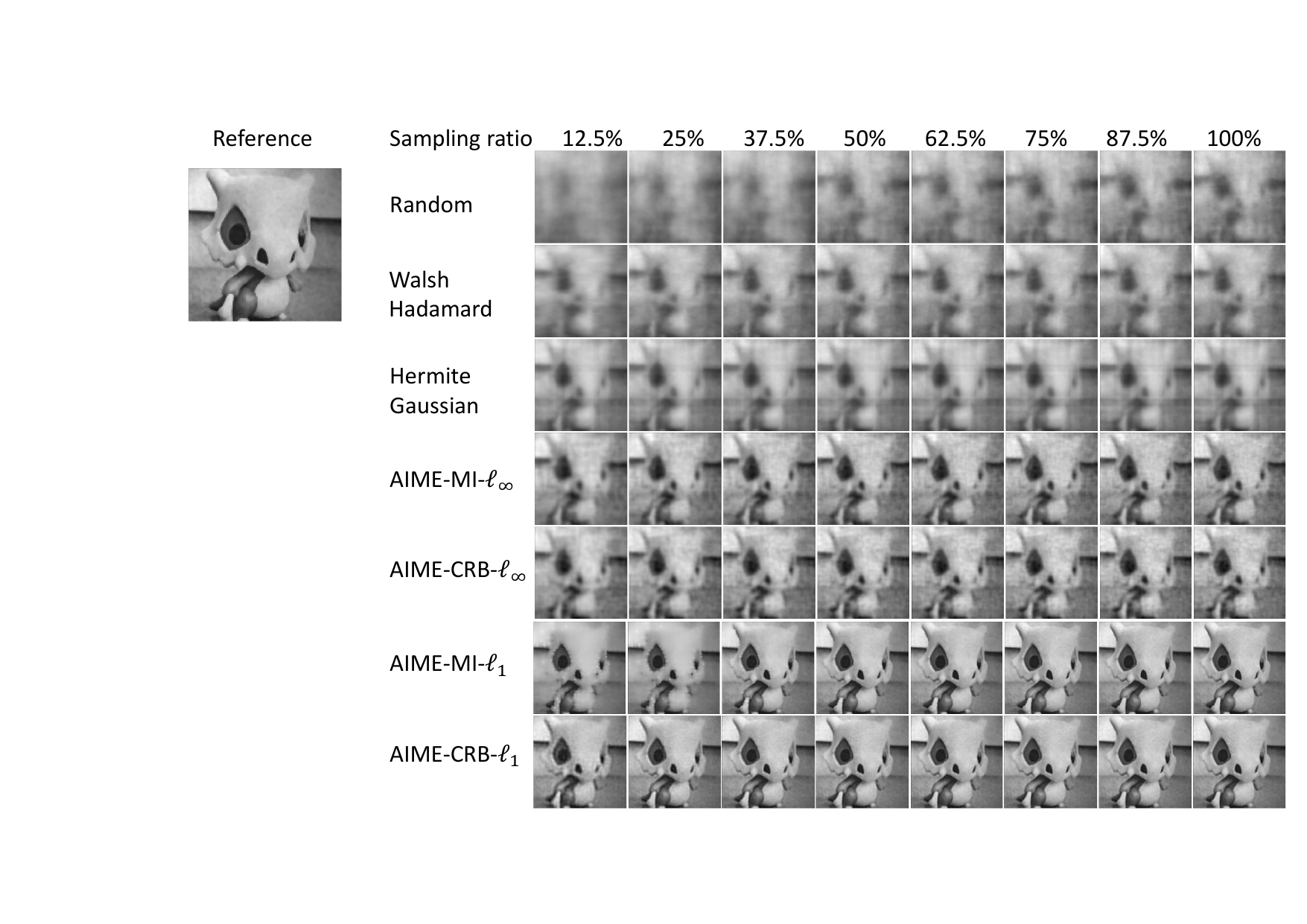}
	\caption{Experimental imaging results of scene 1 under different sampling ratios for comparing GI with AIME and existing encoding patterns.}
	\label{fig:Exper2-1}
\end{figure}

\begin{figure}[ht]
	\centering
	\includegraphics[width=0.78\textwidth]{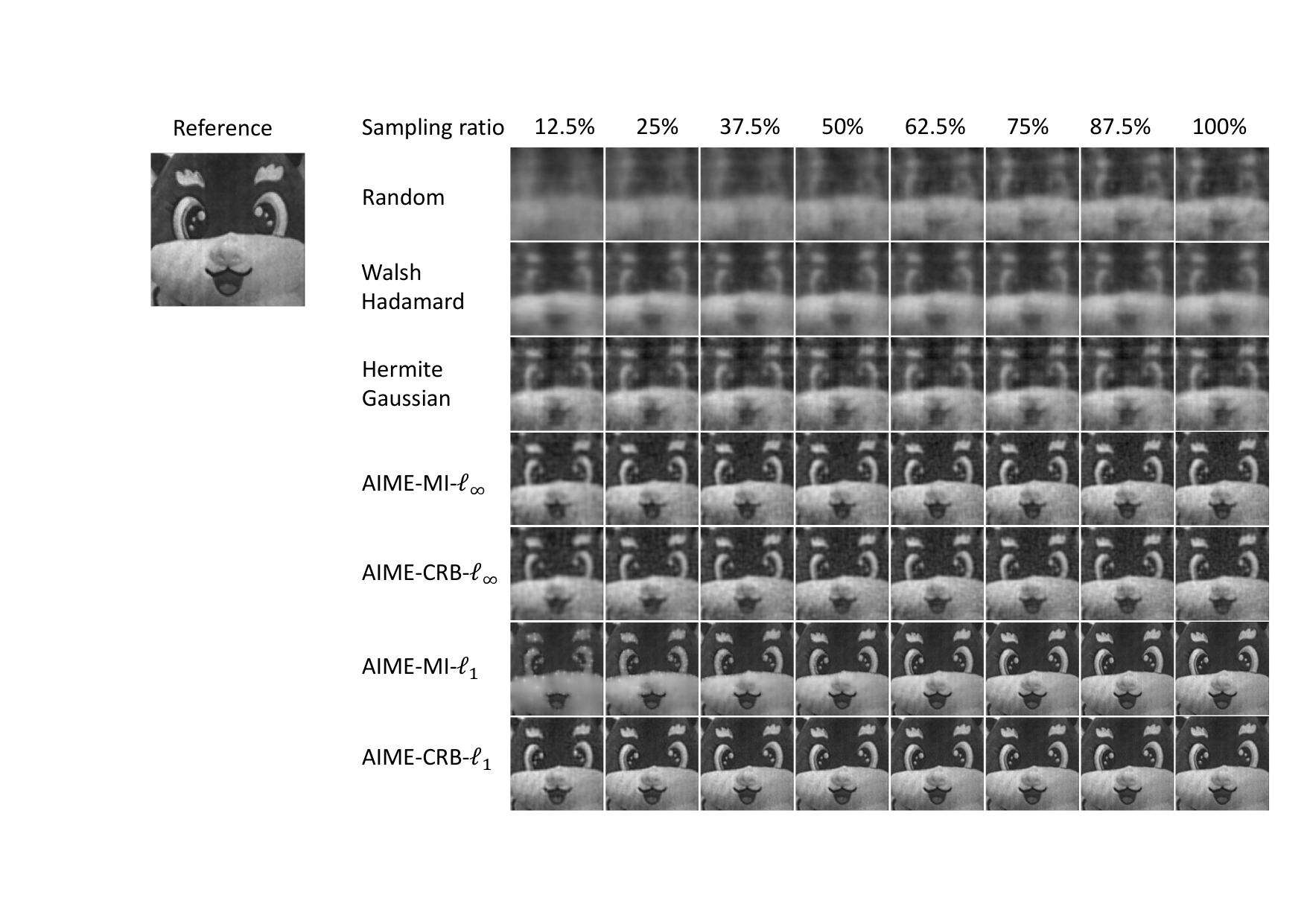}
	\caption{Experimental imaging results of scene 2 under different sampling ratios for comparing GI with AIME and existing encoding patterns.}
	\label{fig:Exper2-2}
\end{figure}

\begin{figure}[ht]
	\centering
	\includegraphics[width=0.85\textwidth]{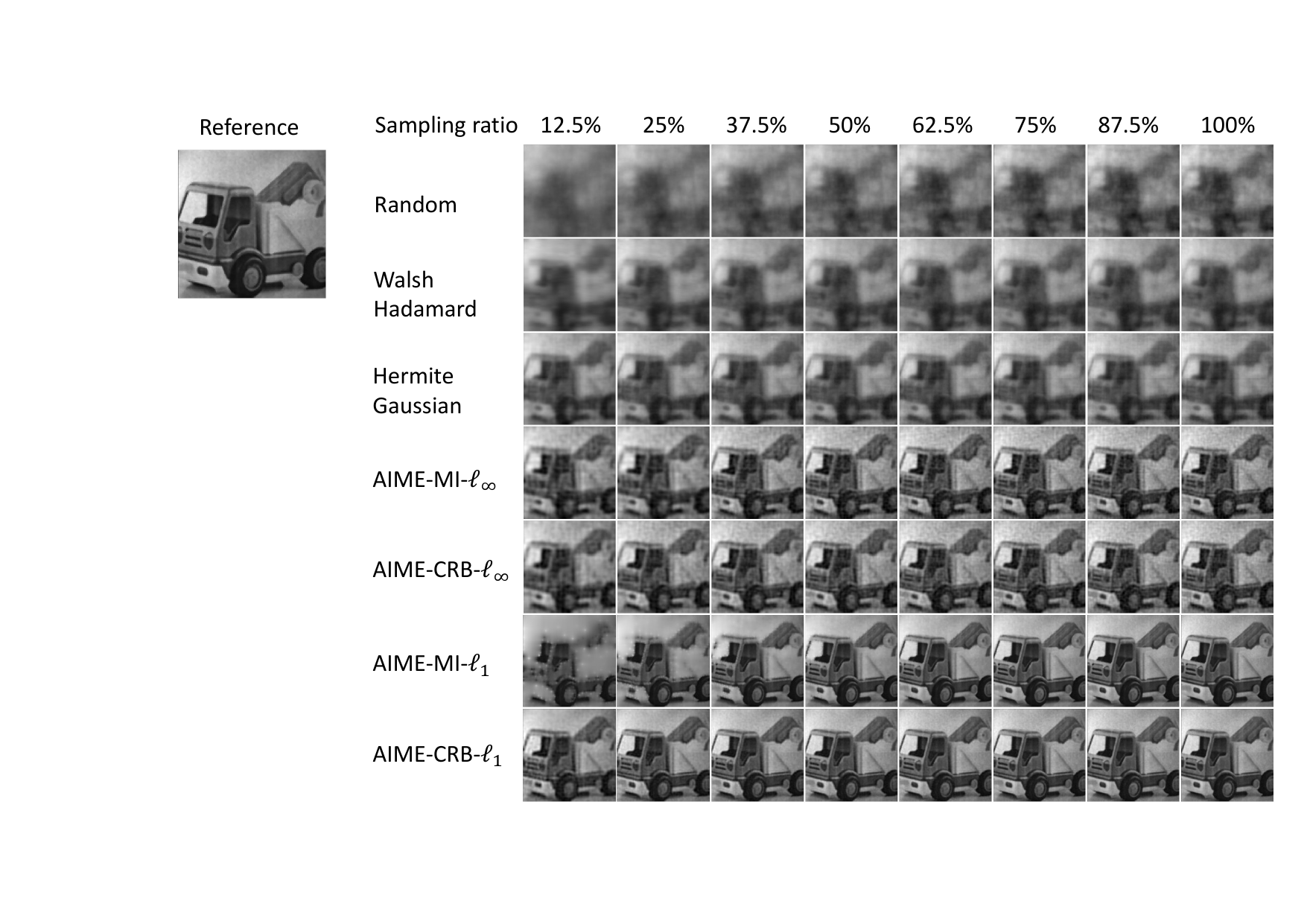}
	\caption{Experimental imaging results of scene 3 under different sampling ratios for comparing GI with AIME and existing encoding patterns.}
	\label{fig:Exper2-3}
\end{figure}

\begin{figure}[ht]
	\centering
	\includegraphics[width=0.85\textwidth]{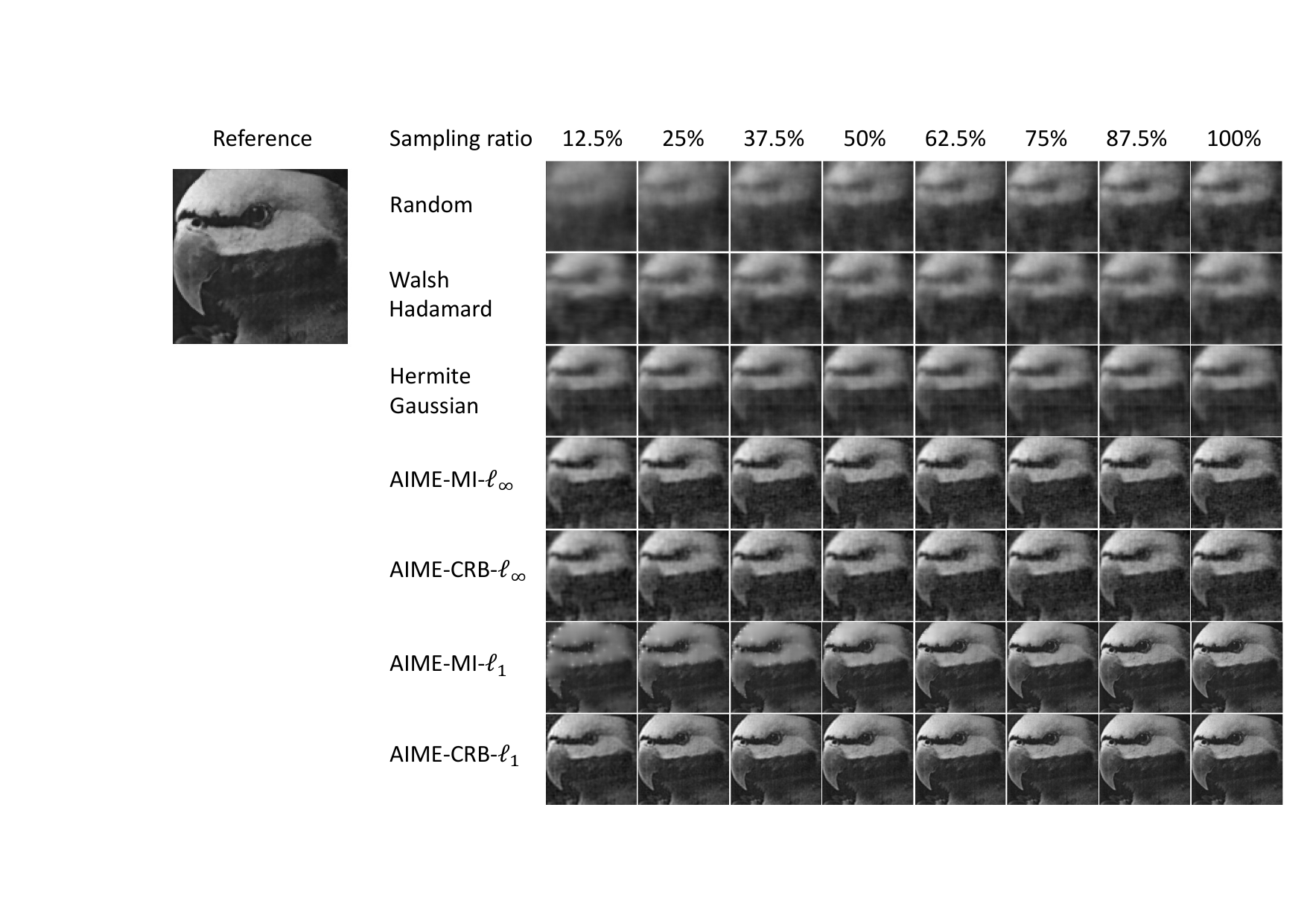}
	\caption{Experimental imaging results of scene 4 under different sampling ratios for comparing GI with AIME and existing encoding patterns.}
	\label{fig:Exper2-4}
\end{figure}

\FloatBarrier
\section{Information-optimal encoding under different SNRs}
Figures~\ref{fig:PattSNR-l1} and~\ref{fig:PattSNR} show AIME results corresponding to the same imaging scene \textit{pepper}  under three different SNRs, which indicates that AIME can adaptively deal with different noise levels and constraints, resulting in different encoding patterns.
\begin{figure}[ht]
	\centering
	\includegraphics[width=0.9\textwidth]{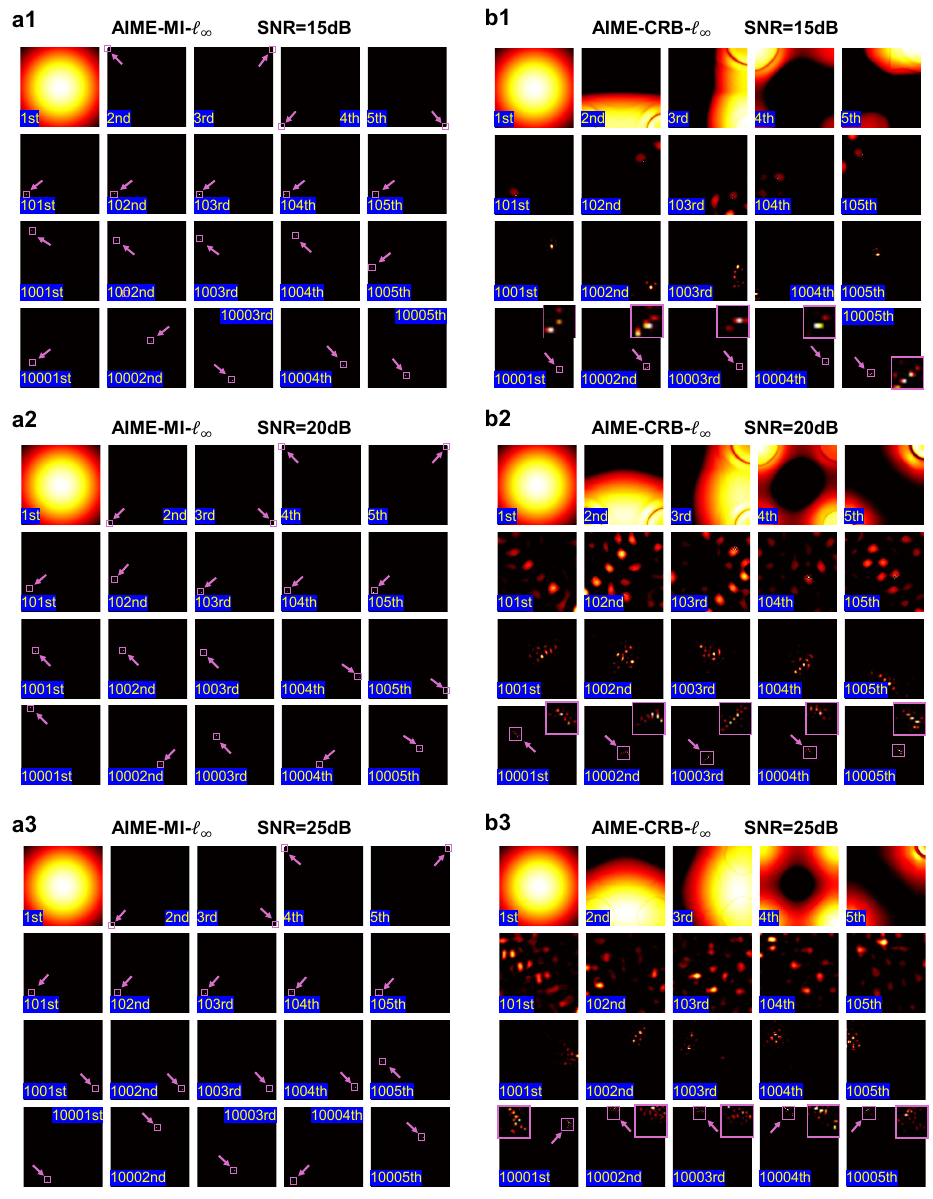}
	\caption{Information-optimal encoding from AIME-$\ell_1$ under three different SNRs.}
	\label{fig:PattSNR-l1}
\end{figure}

\begin{figure}[ht]
	\centering
	\includegraphics[width=0.9\textwidth]{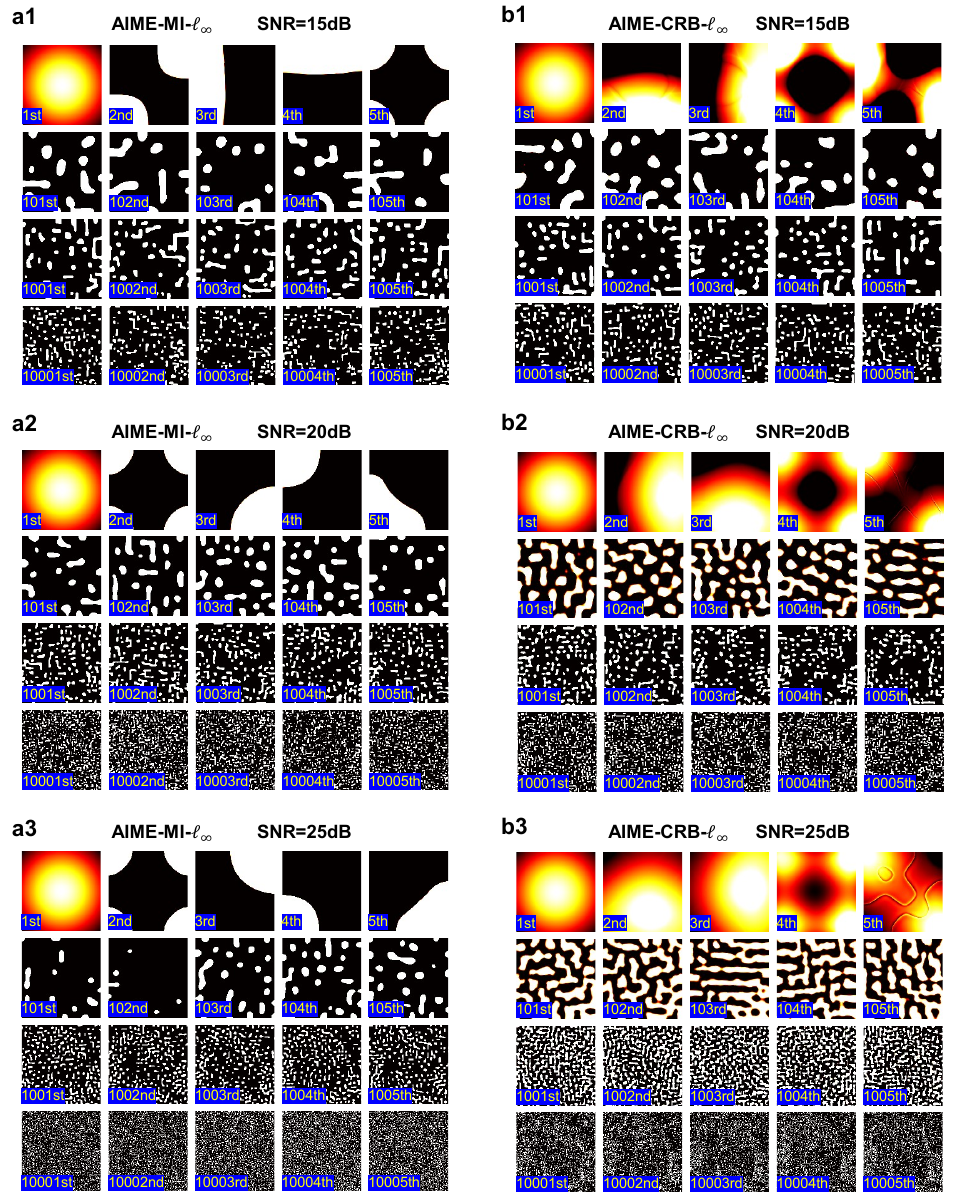}
	\caption{Information-optimal encoding from AIME-$\ell_{\infty}$ under three different SNRs.}
	\label{fig:PattSNR}
\end{figure}
%%===========================================================================================%%
%% If you are submitting to one of the Nature Portfolio journals, using the eJP submission   %%
%% system, please include the references within the manuscript file itself. You may do this  %%
%% by copying the reference list from your .bbl file, paste it into the main manuscript .tex %%
%% file, and delete the associated \verb+\bibliography+ commands.                            %%
%%===========================================================================================%%
\FloatBarrier
\end{appendices}
\bibliography{article-supp-bibliography}
%%===========================================================================================%%
%% If you are submitting to one of the Nature Portfolio journals, using the eJP submission   %%
%% system, please include the references within the manuscript file itself. You may do this  %%
%% by copying the reference list from your .bbl file, paste it into the main manuscript .tex %%
%% file, and delete the associated \verb+\bibliography+ commands.                            %%
%%===========================================================================================%%

%% if required, the content of .bbl file can be included here once bbl is generated
%%\input sn-article.bbl

\end{document}